\title{Portfolio optimization with two coherent risk measures}
\author{Tahsin Deniz Akt\"{u}rk\thanks{The University of Chicago, Booth School of Business, Chicago, IL, USA, akturk@chicagobooth.edu}\and\c{C}a\u{g}{\i}n Ararat\thanks{Bilkent University, Department of Industrial Engineering, Ankara, Turkey, cararat@bilkent.edu.tr.}
}
\date{July 11, 2020}
\makeatletter \renewenvironment{proof}[1][\proofname] {\par\pushQED{\qed}\normalfont\topsep6\p@\@plus6\p@\relax\trivlist\item[\hskip\labelsep\bfseries#1\@addpunct{.}]\ignorespaces}{\popQED\endtrivlist\@endpefalse} \makeatother
\newtheorem{theorem}{Theorem}[section]
\newtheorem{lemma}[theorem]{Lemma}
\newtheorem{proposition}[theorem]{Proposition}
\newtheorem{assumption}[theorem]{Assumption}
\theoremstyle{definition}
\newtheorem{example}[theorem]{Example}
\newtheorem{remark}[theorem]{Remark}
\numberwithin{equation}{section}
\newcommand{\R}{\mathbb{R}}
\newcommand{\sm}{\!\setminus\!}
\DeclareMathOperator{\qri}{qri}
\DeclareMathOperator{\co}{co}
\DeclareMathOperator{\cone}{cone}
\DeclareMathOperator{\interior}{int}
\DeclareMathOperator*{\argmax}{argmax}
\let\abs=\envert
\newcommand{\W}{\mathcal{W}}
\renewcommand{\O}{\Omega}
\renewcommand{\o}{\omega}
\newcommand{\F}{\mathcal{F}}
\newcommand{\D}{\mathscr{D}}
\newcommand{\X}{\mathcal{X}}
\newcommand{\Y}{\mathcal{Y}}
\newcommand{\A}{\mathscr{A}}
\renewcommand{\L}{\mathcal{L}}
\renewcommand{\P}{\mathscr{P}}
\newcommand{\E}{\mathbb{E}}
\newcommand{\V}{\text{Var}}
\newcommand{\M}{\mathscr{M}}
\newcommand{\N}{\mathcal{N}}
\renewcommand{\Pr}{\mathbb{P}}
\newcommand{\Q}{\mathbb{Q}}
\renewcommand{\a}{\alpha}
\renewcommand{\b}{\beta}
\newcommand{\g}{\gamma}
\renewcommand{\d}{\delta}
\renewcommand{\H}{\mathcal{H}}
\newcommand{\1}{\mathbf{1}}
\newcommand{\of}[1]{\ensuremath{\left( #1 \right)}}
\newcommand{\cb}[1]{\ensuremath{ \left\{ #1 \right\} }}
\newcommand{\sqb}[1]{\ensuremath{ \left[ #1 \right] }}
\newcommand{\ip}[1]{\ensuremath{ \left\langle #1 \right\rangle }}
\def\prehp(#1,#2){\ensuremath{  #1 \cdot #2 }}
\begin{document}
\maketitle
\thispagestyle{empty}

\begin{abstract}
We provide analytical results for a static portfolio optimization problem with two coherent risk measures. The use of two risk measures is motivated by joint decision-making for portfolio selection where the risk perception of the portfolio manager is of primary concern, hence, it appears in the objective function, and the risk perception of an external authority needs to be taken into account as well, which appears in the form of a risk constraint. The problem covers the risk minimization problem with an expected return constraint and the expected return maximization problem with a risk constraint, as special cases. For the general case of an arbitrary joint distribution for the asset returns, under certain conditions, we characterize the optimal portfolio as the optimal Lagrange multiplier associated to an equality-constrained dual problem. Then, we consider the special case of Gaussian returns for which it is possible to identify all cases where an optimal solution exists and to give an explicit formula for the optimal portfolio whenever it exists.\\
\\[-5pt]
\textbf{Keywords and phrases:} portfolio optimization, coherent risk measure, mean-risk problem, Markowitz problem\\
\\[-5pt]
\textbf{Mathematics Subject Classification (2010): }90C11, 90C20, 90C90, 91B30, 91G10.
\end{abstract}

\section{Introduction}\label{intro}

The mean-variance portfolio selection problem introduced in the seminal work \cite{markowitz} is one of the most well-studied optimization problems. In the basic static version of the problem, one considers multiple correlated assets with known expected returns and covariances, and looks for an allocation of these assets. Considering the trade-off between the linear expected return and the quadratic variance, the problem can be formulated as a biobjective optimization problem whose efficient solutions form the so-called \emph{efficient frontier} on the mean-variance (or mean-standard deviation) plot of all portolios. \cite{merton} provides an analytical derivation of the efficient frontier for the general case of $n\geq 2$ assets.

The biobjective mean-variance problem can also be studied in terms of a parametric family of scalar (single-objective) problems. Among the popular \emph{scalarizations} are the ones where one minimizes variance over the set of all portfolios at a given expected return level, which is used as the parameter of the scalar problem. Analogously, one can impose a constraint on the variance using an upper bound parameter and maximizes expected return. Quite naturally, both approaches can be used to verify the analytical results in \cite{merton}.

Started with \cite{artzner}, the theory of \emph{coherent risk measures} provides an axiomatic approach to come up with functionals possessing desirable properties for risk measurement purposes. Such properties include monotonicity and translativity (see Section~\ref{rm} for precise definitions), which are not satisfied by variance or standard deviation. Canonical examples of coherent risk measures include (negative) expected value and average value-at-risk \citep{rockafellar2}. In addition, value-at-risk is also known to be a coherent risk measure when considered on a space of Gaussian random variables. Each of these three risk measures is also law-invariant in the sense that two random variables with the same distribution have the same risk.

With a coherent risk measure, one can formulate the corresponding mean-risk portfolio optimization problem by replacing variance with the risk measure. For average value-at-risk, this problem is considered in \cite{rockafellar} in the form of risk minimization subject to an expected return constraint. When jointly Gaussian asset returns are assumed, the risk objective function reduces to the sum of a linear function and the square-root of a quadratic form. The special structure of this case is exploited in \cite{landsmanold}, \cite{owadally}, where analytical results are obtained. A more general objective function in which a differentiable function of variance is added to a linear function is considered in \cite{landsman}, which also provides closed-form solutions. It should be noted that all of these works assume linear constraints.

In this paper, we consider a ``risk-risk problem" where a coherent risk measure is minimized subject to a constraint on a second coherent risk measure. The purpose of using two risk measures is to take into account two risk perceptions when choosing a portfolio. The principle risk measure to be minimized may reflect the risk perception of the portfolio manager while the secondary risk measure in the constraint reflects that of an external authority. Similar to the mean-variance case, the single-objective problem we consider can be seen as a scalarization of a biobjective problem where the objectives are the risk measures of the two bodies who are supposed to choose a portfolio jointly. One advantage of our framework is that it includes both versions of the mean-risk problem as special cases: the one that minimizes risk as well as the one that maximizes expected return.

We first study the risk-risk problem in a general setting where the underlying asset returns are in some $L^p$ space with $p\in[1,+\infty]$ and they have an arbitrary joint distribution with possible correlations. Assuming that the two risk measures are continuous from below so that the suprema in the dual representations are attained at some dual probability measures, we derive a simple dual problem with a linear objective and a linear equality constraint in addition to domain constraints for the dual variables. As the main result of Section~\ref{arbitrary}, under certain constraint qualifications, we show that an optimal solution for the portfolio optimization problem can be obtained as the Lagrange multiplier of the equality constraint of the dual problem at optimality.

At the technical level, the risk-risk problem is a finite-dimensional convex optimization problem. We use the standard Slater's condition (see Assumption~\ref{slater} below) as a constraint qualification to guarantee the existence of optimal Lagrange multipliers for the constraints. Then, we work on the Lagrange dual problem and refine it further by introducing additional dual variables through the dual representations of the two coherent risk measures. This refinement yields a finalized equality-constrained dual problem which is infinite-dimensional due to the dual densities related to the risk measures. To guarantee the existence of an optimal Lagrange multiplier attached to the equality constraint, we need a second constraint qualification. However, the usual Slater's condition with interiority assumptions for the domain constraints is not suitable for this setting due to the fact that many sets in $L^q$ (with $\frac1p+\frac1q=1$), even the positive cone $L^q_+$, may fail to have an empty interior. For this reason, we use the notion of quasi relative interior and the related mild constraint qualification in \cite{borwein} (see Assumption~\ref{quasirelint} below), which still guarantees the existence of an optimal Lagrange multiplier.

In Section~\ref{continuousproblem}, we study the special case where the asset returns are jointly Gaussian and the risk measures are law-invariant. By exploiting the properties of Gaussian distribution and those of the risk measures, the portfolio optimization problem reduces to a problem only with the square-root of a quadratic function and a linear function in the objective and in the constraints. In particular, unlike the above-mentioned works for the Gaussian case, we have a nonlinear constraint that imposes an upper bound on the sum of the square-root of a quadratic form and a linear function.

In the Gaussian case, we observe that the problem can be solved with the help of the hyperbola appearing in the analysis of the mean-variance problem as in \cite{merton}. Indeed, as an associated problem, we consider the minimization of a linear function subject to a linear constraint \emph{over this hyperbola}, which is simply a two-dimensional problem and has a clear geometric interpretation. Using this problem, we provide a complete analysis of the main problem. In particular, we identify all cases in which an optimal solution exists, a unique optimal solution exists, the infimum is finite but not attained, and the problem is unbounded (Section~\ref{continuousproblem}). We provide closed-form expressions for an optimal portfolio, whenever it exists.

\section{Mathematical setup}\label{setup}

\subsection{Portfolios}

We are concerned with various risk-averse versions of the portfolio selection problem on a domain of finitely many risky assets with possibly correlated returns in a one-period market model. To introduce the setup of the problem, we let $n\geq 2$ be an integer denoting the number of assets in the market and write $\N=\cb{1,\ldots,n}$ for the set of these assets. As usual, we denote by $\R^n$ the $n$-dimensional real Euclidean space and by $\R^n_+$ the cone of all vectors $x=(x_1,\ldots,x_n)^{\mathsf{T}}\in\R^n$ with $x_i\geq 0$ for each $i\in\cb{1,\ldots,n}$. For $x,z\in\R^n$, we define their scalar product by $x^{\mathsf{T}}z\coloneqq\sum_{i=1}^nx_iz_i$. Let us fix a probability space $(\O,\F,\Pr)$. We denote by $L_n^0$ the space of all $n$-dimensional random vectors distinguished up to almost sure equality. For each $p\in [1,+\infty)$, we define $L^p_n=\cb{X\in L_n^0\mid \E\sqb{\abs{X}^p}}<+\infty$, and for $p=+\infty$, we define $L_n^\infty=\cb{X\in L_n^0\mid \exists c>0\colon \Pr\cb{\abs{X}\leq c}=1}$, where $\abs{\cdot}$ is an arbitrary norm on $\R^n$. We write $L^p=L_1^p$ for each $p\in\cb{0}\cup[1,+\infty]$.

Let us fix $p\in[1,+\infty]$ and consider a possibly correlated random vector $X=(X_1,\ldots,X_n)^{\mathsf{T}}\in L_n^p$. For each $i\in\N$, the random variable $X_i$ denotes the return of the $i^{\text{th}}$ asset for a fixed period as a multiple of the initial price of that asset. In our context, a portfolio is defined as a vector in $\R^n$ each of whose components denotes the weight of the corresponding asset in the portfolio based on the asset prices at the beginning of the period. Hence, the set of all portfolios is the set
\begin{equation}\label{defnw}
\W\coloneqq\cb{w\in\R^n\mid \sum_{i=1}^n w_i= 1}=\cb{w\in\R^n\mid \1^{\mathsf{T}}w = 1}.
\end{equation}
When shortselling is not allowed, we will restrict ourselves to the portfolios in the subset
\begin{equation}\label{posw}
\W_+\coloneqq\W\cap \R^n_+,
\end{equation}
which is the $(n-1)$-dimensional unit simplex. Note that, for a portfolio $w\in\W$, we have $w^{\mathsf{T}}X\in L^p$, which denotes the return of the porfolio.

\subsection{Risk measures}\label{rm}

We provide a quick review of the theory of risk measures on $L^p$ with $p\in[1,+\infty]$. The reader is referred to \cite{kaina} (for $p\in[1,+\infty)$) and to \citet{fs:sf} (for $p=+\infty$) for a detailed account of the convex-analytic properties of risk measures on $L^p$.

For $Y_1,Y_2\in L^p$, we write $Y_1\leq Y_2$ if $Y_1(\o)\leq Y_2(\o)$ for $\Pr$-almost every $\o\in\O$ and $Y_1\sim Y_2$ if $Y_1$ and $Y_2$ are identically distributed. A functional $\rho\colon L^p\to\R\cup\{+\infty\}$ is said to be a coherent risk measure if it satisfies the following properties.
\begin{enumerate}[\bf (i)]
	\item \textbf{Monotonicity:} $Y_1 \leq Y_2$ implies $ \rho(Y_1) \geq \rho(Y_2)$ for every $Y_1,Y_2\in L^p$.
	\item \textbf{Translativity:} It holds $\rho(Y + y)=\rho(Y)-y$ for every $Y\in L^p$ and $y\in\R$.
	\item \textbf{Subadditivity:} It holds $\rho(Y_1 + Y_2) \leq\rho(Y_1) +\rho(Y_2)$ for every $Y_1,Y_2\in L^p$.
	\item \textbf{Positive homogeneity:} It holds $\rho(\lambda Y)=\lambda \rho(Y)$ for every $Y\in L^p$ and $\lambda \geq 0$.
\end{enumerate}
Clearly, positive homogeneity implies the following property.
\begin{enumerate}[\bf (i)]
	\setcounter{enumi}{4}
	\item \textbf{Normalization:} It holds $\rho(0)=0$.
\end{enumerate}
Moreover, it is easy to check that, under positive homogeneity, subadditivity is equivalent to the following property.
\begin{enumerate}[\bf (i)]
	\setcounter{enumi}{5}
	\item \textbf{Convexity:} It holds $\rho(\lambda Y_1+(1-\lambda)Y_2)\leq \lambda \rho(Y_1)+(1-\lambda)\rho(Y_2)$ for every $Y_1,Y_2\in\Y$ and $\lambda\in[0,1]$.
\end{enumerate}

Let $\rho$ be a coherent risk measure. In Section~\ref{arbitrary}, we need the following additional property:
\begin{enumerate}[\bf (i)]
	\setcounter{enumi}{6}
	\item \textbf{Finiteness:} $\rho(Y)<+\infty$ for every $Y\in L^p$.
\end{enumerate}

Let $\mathcal{M}_1(\Pr)$ denote the set of all probability measures on $(\O,\F)$ that are absolutely continuous with respect to $\Pr$. Let $q\in[1,+\infty]$ such that $\frac1p+\frac1q=1$ and define
\[
\mathcal{M}_1^{q}(\Pr)\coloneqq\cb{\Q\in\mathcal{M}_1(\Pr)\mid \frac{d\Q}{d\Pr}\in L^q}.
\]
Note that $\mathcal{M}_1^{q}(\Pr)=\mathcal{M}_1(\Pr)$.

If $p\in [1,+\infty)$, then finiteness is equivalent to having a dual representation of the form
\begin{equation}\label{dualrep}
\rho(Y)=\max_{\Q\in \mathcal{Q}}\E^{\Q}\sqb{-Y},\quad Y\in L^p,
\end{equation}
for some convex set $\mathcal{Q}\subseteq \mathcal{M}_1^q(\Pr)$ of probability measures such that the corresponding set
\[
\mathcal{D}(\mathcal{Q})\coloneqq \cb{\frac{d\Q}{d\Pr}\mid \Q\in\mathcal{Q}}
\]
of Radon-Nikodym derivatives is $\sigma(L^q,L^p)$-compact; see \citet[Theorem~2.11]{kaina}. Moreover, finiteness also implies that $\rho$ satisfies the following property \cite[Theorem~3.1]{kaina}: 
\begin{enumerate}[\bf (i)]
	\setcounter{enumi}{7}
	\item \textbf{Continuity from below:} If $Y, Y_1,Y_2,\ldots\in L^p$ such that $Y_1\leq Y_2\leq\ldots$ and $\lim_{k\rightarrow\infty}Y_k=Y$ $\Pr$-almost surely, then $\lim_{k\rightarrow\infty}\rho(Y_k)=\rho(Y)$.
\end{enumerate}

If $p=+\infty$, then monotonicity and translativity ensure that $\rho(Y)<+\infty$ for every $Y\in L^\infty$ without an additional assumption. Nevertheless, if one further assumes continuity from below, then a representation of the form \eqref{dualrep} holds for some convex set $\mathcal{Q}\subseteq\mathcal{M}_1(\Pr)$ such that $\mathcal{D}(\mathcal{Q})$ is $\sigma(L^1,L^\infty)$-compact; see \citet[Theorem~3.6]{kaina}.

Finally, we formulate the following additional property that will be needed in Section~\ref{continuousproblem}.
\begin{enumerate}[\bf (i)]
	\setcounter{enumi}{8}
	\item \textbf{Law-invariance:} $Y_1\sim Y_2$ implies $\rho(Y_1)=\rho(Y_2)$ for every $Y_1,Y_2\in L^p$.
\end{enumerate}

Let us recall three commonly used risk measures: negative expected value, value-at-risk and average value-at-risk.

\begin{example}\label{expectation}
	(Negative expected value) Let $p=1$ and take $\rho(Y)=\E\sqb{-Y}$ for every $Y\in L^1$. It is easy to check that $\rho$ satisfies properties (i)-(viii) above. In the dual representation \eqref{dualrep}, we simply have $\mathcal{Q}=\cb{\Pr}$ so that $\mathcal{D}(\mathcal{Q})=\cb{1}\subseteq L^\infty$.
	\end{example}

\begin{example}\label{var}
	(Value-at-risk) Let $p=1$. Let $\theta\in(0,1)$ be a probability level. The \emph{value-at-risk} at level $\theta$ for a random variable $Y\in L^1$ is defined as
	\[
	V@R_\theta(Y)\coloneqq \sup\cb{r\in\R\mid \Pr\cb{Y+r\leq 0}>\theta}.
	\]
	It is well-known that $V@R_\theta$ is a law-invariant positively homogeneous risk measure which fails to be convex. However, if $X$ is a Gaussian random vector and $\Y$ is the Gaussian subspace of $L^2$ spanned by $X_1,\ldots,X_n$ and the constant random variable $1$, it holds (see Proposition~\ref{reformula} below)
	\[
	V@R_\theta(Y)=\Phi^{-1}(1-\theta)\sqrt{\V(Y)}-\E\sqb{Y}
	\]
	for every $Y\in\Y$, where $\Phi$ is the cumulative distribution function of a standard Gaussian random variable $Z$. In particular, $V@R_\theta(Z)=\Phi^{-1}(1-\theta)$. As $Y\mapsto \sqrt{\V(Y)}$ is a convex function on $L^2$, $V@R_\theta$ is a law-invariant coherent risk measure on $\Y$.
\end{example}

\begin{example}\label{cvar}
	(Average value-at-risk) Let $\theta\in(0,1)$ be a probability level. The \emph{average value-at-risk} at level $\theta$ for $Y\in L^1$ is defined as
	\[
	AV@R_\theta (Y)\coloneqq \frac{1}{\theta}\int_0^\theta V@R_u(Y)du.
	\]
	It is well-known that $AV@R_\theta$ is a law-invariant coherent risk measure on $L^1$. In the dual representation in \eqref{dualrep}, we may take $\mathcal{Q}=\{\Q\in\mathcal{M}_1(\Pr)\mid \Pr\{\frac{d\Q}{d\Pr}\leq \frac{1}{\theta}\}=1\}$ so that
	\[
	\mathcal{D}(\mathcal{Q})=\cb{V\in L^\infty\mid \Pr\cb{0\leq V\leq \frac{1}{\theta}}=1}.
	\]
	 On the other hand, for every $Y\in\Y$, where $\Y$ is the Gaussian subspace $\Y$ of $L^2$ in Example~\ref{var}, we have
	\[
	AV@R_\theta(Y)=AV@R_\theta(Z)\sqrt{\V(Y)}-\E\sqb{Y},
	\]
	where $Z$ is a standard Gaussian random variable with
	\[
	AV@R_\theta(Z)=\int_0^\theta \Phi^{-1}(1-u)du.
	\]
\end{example}

\subsection{The portfolio optimization problem}

In this section, we formulate the continuous portfolio optimization problem of our interest.

To model risk-aversion, let $\rho_1,\rho_2\colon L^p\to\R$ be two arbitrary coherent risk measures. The aim of the portfolio manager is to choose a portfolio $w\in\W$ that minimizes the type 1 risk $\rho_1(w^{\mathsf{T}}X)$ while controlling the type 2 risk $\rho_2(w^{\mathsf{T}}X)$ within a fixed threshold level $r\in\R$, that is, while satisfying
\[
\rho_2(w^{\mathsf{T}}X)\leq r,
\]
which we refer to as the \emph{risk constraint}. The use of two risk measures makes sense in cases where the portfolio manager has the right to choose the portfolio using $\rho_1$ as the suitable risk measure for her risk perception but an external regulatory authority with a different risk perception reflected by $\rho_2$ imposes the risk constraint as an obligation for the portfolio manager. It also makes sense when the portfolio manager wishes to work with two risk measures, the principle one ($\rho_1$) having a higher seniority than the other ($\rho_2$). In particular, this framework covers as special cases the problem of maximizing expected return subject to a risk constraint if we take $\rho_1(Y)=\E\sqb{-Y}$ for each $Y\in L^p$, as well as the problem of minimizing (the type 1) risk while maintaining a high-enough expected return if we take $\rho_2(Y)=\E\sqb{-Y}$ for each $Y\in L^p$.

With these risk considerations, we formulate the continuous portfolio optimization problem with shortselling as
\begin{align*}
&\text{minimize}\; \;\;                  \rho_1(w^{\mathsf{T}}X)\tag{$\P(r)$}\\
&\text{subject to}\;\;                   \rho_2(w^{\mathsf{T}}X)\leq r \\
& \quad\quad\quad\quad\;\;\;      w\in \W.
\end{align*}
In this paper, we provide analytical results for $(\P(r))$ in two cases:
\begin{itemize}
\item \textbf{General case:} For a random return vector $X$ with an arbitrary distribution and assuming that $\rho_1, \rho_2$ are continuous from below, we characterize an optimal solution for $(\P(r))$ as a Lagrange multiplier of an associated dual problem in Section~\ref{arbitrary}.
\item \textbf{Gaussian case:} For a Gaussian random return vector $X$ and assuming that $\rho_1,\rho_2$ are law-invariant, we provide a complete analysis of the problem with explicit formulae for an optimal solution and identify the cases where it exists and where it is unique in Section~\ref{continuousproblem}.
\end{itemize}

\section{The portfolio optimization problem under an arbitrary joint distribution}\label{arbitrary}

In this section, we assume that $X\in L_n^p$ for a fixed $p\in[1,+\infty]$ and $\rho_1,\rho_2$ are arbitrary coherent risk measures on $L^p$ that are finite and continuous from below. In particular, $\rho_1,\rho_2$ are continuous on $L^p$; see \citet[Corollary~2.3]{kaina}. Recalling \eqref{dualrep}, $\rho_1,\rho_2$ admit dual representations of the form
\[
\rho_1(Y)=\max_{\Q_1\in\mathcal{Q}_1}\E^{\Q_1}\sqb{-Y},\quad \rho_2(Y)=\max_{\Q_2\in\mathcal{Q}_2}\E^{\Q_2}\sqb{-Y},
\]
for each $Y\in L^p$, where $\mathcal{Q}_1,\mathcal{Q}_2$ are convex subsets of $\mathcal{M}^q_1(\Pr)$ such that the corresponding density sets $\mathcal{D}(\mathcal{Q}_1), \mathcal{D}(\mathcal{Q}_2)$ are convex $\sigma(L^q,L^p)$-compact subsets of $L^q$. For each $j\in\cb{1,2}$, let us define the continuous convex function $g_{j}\colon\R^n\to\R$ by
\[
g_j(w)= \rho_j(w^{\mathsf{T}}X)=\max_{V\in \mathcal{D}(\mathcal{Q}_j)}\E\sqb{-Vw^{\mathsf{T}}X}
\]
for each $w\in\R^n$.

As a preparation for the statement and the proof of the main result, we recall a few notions and facts from convex analysis. Let $\X$ be an Hausdorff locally convex topological linear space with topological dual $\Y$ and bilinear duality mapping $\ip{\cdot,\cdot}\colon\Y\times\X\to\R$. For the purposes of this paper, we are interested in three special cases:
\begin{enumerate}[(i)]
	\item $\X=\R^n$ with the usual topology, which yields $\Y=\R^n$ together with $\ip{y,x}=y^{\mathsf{T}}x$ for every $x\in\R^n, y\in\R^n$.
	\item $\X=L^q$ with $q\in [1,+\infty)$ with the weak topology $\sigma(L^q,L^p)$, which yields $\Y=L^p$ together with $\ip{Y,U}=\E\sqb{UY}$ for every $U\in L^q$, $Y\in L^p$.
	\item $\X=L^\infty$ with the weak topology $\sigma(L^\infty,L^1)$, which yields $\Y=L^1$ together with $\ip{Y,U}=\E\sqb{UY}$ for every $U\in L^\infty$, $Y\in L^1$.
	\end{enumerate}
Let $A\subseteq \X$ be a set. $\cone(A)\coloneqq \cb{\lambda x\mid \lambda\geq 0, x\in A}$ is called the \emph{conic hull} of $A$. If $A$ is convex, then $\cone(A)$ is a convex cone. For $x\in A$, the convex cone
\[
\mathcal{N}_A(x)\coloneqq\cb{y\in\Y\mid \forall x^\prime\in A\colon \ip{y,x}\geq \ip{y,x^\prime}}
\]
is called the \emph{normal cone} of $A$ at $x$. The function $I_A\colon\X\to\R\cup\cb{+\infty}$ defined by $I_A(x)=0$ for $x\in A$ and $I_A(x)=+\infty$ for $x\in \X\sm A$ is called the \emph{indicator function} of $A$. Note that $A$ is convex if and only if $I_A$ is convex. Let $g\colon \X\to\R\cup\cb{+\infty}$ be a function. For a point $x\in \X$, the set $\partial g(x)\coloneqq\cb{y\in\Y\mid \forall x^\prime\in \X\colon g(x^\prime)\geq g(x)+\ip{y,x^\prime-x}}$ is called the \emph{subdifferential} of $g$ at $x$. If $A$ is a nonempty convex set, then it is well-known that \citep[Section~2.4]{zalinescu} $\partial I_A(x)=\N_A(x)$ for every $x\in A$, and $\partial I_A(x)=\emptyset$ for every $x\in\X\sm A$. The function $g^\ast\colon\Y\to\R\cup\cb{\pm\infty}$ defined by $g^\ast(y)\coloneqq\sup_{x\in\X}\of{\ip{y,x}-g(x)}$ for each $y\in\Y$ is called the \emph{conjugate function} of $g$. Note that $y\in\partial g(x)$ holds if and only if $x\in\partial g^{\ast}(y)$ holds for every $x\in\X,y\in\Y$ such that $g$ is lower semicontinuous at $x$.

To formulate a second constraint qualification, we also need the following. For $A\subseteq\X$, the set
\[
\qri(A)\coloneqq \cb{x\in A\mid \N_A(x)\text{ is a subspace of }\Y}
\]
is called the \emph{quasi relative interior} of $A$ \citep[Proposition~2.8]{borwein}. When $\X=\R^n$, $\qri(A)$ coincides with the relative interior of $A$. In this case, $\qri(A)\neq \emptyset$ whenever $A$ is nonempty, closed and convex. When $\X=L^q$ ($q\in[1,+\infty]$) is considered with the topology $\sigma(L^q, L^p)$ and $A$ is nonempty, closed and convex, one has $\qri(A)\neq\emptyset$ again thanks to \citet[Theorem~2.19]{borwein}. In particular, if $A=L^q_+\coloneqq\cb{U\in L^q\mid \Pr\cb{U\geq 0}=1}$, then $\qri(A)=\cb{U\in L^q\mid \Pr\cb{U>0}=1}$ by \citet[Example~3.11]{borwein}. (For $q<+\infty$, considering the strong and weak topologies on $L^q$ yield the same quasi relative interior for a convex set by \citet[Proposition~2.6]{borwein}.)

To be able to study a dual problem with zero duality gap, we work under the following constraint qualification for $(\P(r))$.

\begin{assumption}\label{slater}
	(Slater's condition) There exists $w\in\W$ such that $\rho_2(w^{\mathsf{T}}X)<r$.
\end{assumption}

The main theorem of this section is Theorem~\ref{dualitythm} below. In its proof, by constructing a Lagrange dual problem for $(\P(r))$ and exploiting the dual representations of $\rho_1,\rho_2$, we obtain the following finalized dual problem $(\D(r))$ with an equality constraint.
\begin{align*}
&\text{maximize}\; \;               -r\E\sqb{M} -\lambda\tag{$\D(r)$}\\
&\text{subject to}\;\;                   \E\sqb{UX}+\E\sqb{MX}-\lambda\1=0\\
& \quad\quad\quad\quad\;\;\;      U\in\mathcal{D}(\mathcal{Q}_1),\; M\in\cone(\mathcal{D}(\mathcal{Q}_2)),\; \lambda\in\R.
\end{align*}
The theorem states that an optimal solution for $(\P(r))$ can be calculated as the Lagrange multiplier of the equality constraint of $(\D(r))$ at optimality.

In addition to Assumption~\ref{slater} for $(\P(r))$, we use in Theorem~\ref{dualitythm} the following constraint qualification for $(\D(r))$ based on quasi relative interior.

\begin{assumption}\label{quasirelint}
	\cite[Corollary~4.8]{borwein} There exist $U\in\qri(\mathcal{D}(\mathcal{Q}_1))$, $M\in\qri(\cone(\mathcal{D}(\mathcal{Q}_2)))$, $\lambda\in\R$ such that
	\[
	\E\sqb{UX}+\E\sqb{MX}-\lambda\1=0.
	\]
	\end{assumption}

Note that Assumption~\ref{quasirelint} simply states that one can find $U\in\qri(\mathcal{D}(\mathcal{Q}_1))$ and $M\in\qri(\cone(\mathcal{D}(\mathcal{Q}_2)))$ such that $\E\sqb{UX}+\E\sqb{MX}$ is a constant vector in $\R^n$. The comparison of this assumption with the usual interior-based constraint qualifications is discussed in Remark~\ref{dualslaterremark} after the examples. This remark is followed by Remark \ref{solvability}, where we comment on the usefulness of Theorem~\ref{dualitythm} for computations.

\begin{theorem}\label{dualitythm}
	Under Assumption~\ref{slater} and Assumption~\ref{quasirelint}, suppose that there exists an optimal solution $(U^\ast,M^\ast, \lambda^\ast)\in L^q\times L^q\times \R$ of $\mathscr{D}(r)$. Then, there exists an optimal Lagrange multiplier $w^\ast\in\R^n$ associated to the equality constraint of $\mathscr{D}(r)$. Moreover, every $w^\ast\in\R^n$ that is the Lagrange multiplier of the equality constraint of $\mathscr{D}(r)$ at optimality is an optimal solution for $(\P(r))$.
	\end{theorem}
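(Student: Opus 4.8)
The plan is to exhibit $(\P(r))$ and $(\D(r))$ as a Lagrangian dual pair, so that the optimal portfolio is precisely the multiplier attached to the $\R^n$-valued equality constraint of $(\D(r))$. First I would view $(\P(r))$ as a finite-dimensional convex program with continuous convex objective $g_1$, convex inequality constraint $g_2(w)-r\le 0$, and affine constraint $w\in\W$. Under Assumption~\ref{slater}, ordinary Lagrangian duality then gives zero duality gap together with an optimal inequality multiplier $\mu^\ast\ge 0$, so that the value of $(\P(r))$ equals $\max_{\mu\ge 0}\inf_{w\in\W}\bigl(g_1(w)+\mu(g_2(w)-r)\bigr)$.

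Next I would transform this dual into $(\D(r))$. Inserting the dual representations $g_j(w)=\max_{V\in\mathcal{D}(\mathcal{Q}_j)}\E\sqb{-Vw^{\mathsf{T}}X}$ from \eqref{dualrep}, the bracketed expression becomes $\inf_{w\in\W}\sup_{U,M}\E\sqb{-(U+M)w^{\mathsf{T}}X}$ with $U\in\mathcal{D}(\mathcal{Q}_1)$ and $M\in\mu\,\mathcal{D}(\mathcal{Q}_2)$. Since $\mathcal{D}(\mathcal{Q}_1)$ and $\mathcal{D}(\mathcal{Q}_2)$ are convex and $\sigma(L^q,L^p)$-compact and the integrand is bilinear, a minimax theorem permits interchanging $\inf_w$ and $\sup_{U,M}$; the ensuing inner linear minimization over the hyperplane $\W$ is finite exactly when $\E\sqb{UX}+\E\sqb{MX}$ equals a multiple $\lambda\1$ of $\1$, with value $-\lambda$. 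Setting $M=\mu V$ and noting $\mu=\E\sqb{M}$ merges $\mu$ and $V$ into one variable $M\in\cone(\mathcal{D}(\mathcal{Q}_2))$ with $\mu r=r\E\sqb{M}$, which reproduces $(\D(r))$ verbatim and shows that its value agrees with that of $(\P(r))$.

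It then remains to produce and identify the multiplier. I would treat $(\D(r))$ as a convex program with linear objective, convex feasible set $\mathcal{D}(\mathcal{Q}_1)\times\cone(\mathcal{D}(\mathcal{Q}_2))\times\R$, and the finite-dimensional linear equality constraint $\E\sqb{UX}+\E\sqb{MX}-\lambda\1=0$. The decisive difficulty, and the step I expect to be hardest, is that $L^q_+$ and the relevant domains have empty interior, so no interior-based Slater condition is available for $(\D(r))$; this is resolved by the quasi-relative-interior qualification of Assumption~\ref{quasirelint}, namely \cite[Corollary~4.8]{borwein}. Because the constraint range is the finite-dimensional space $\R^n$ and Assumption~\ref{quasirelint} furnishes a feasible point in the quasi relative interior of the domain, and because $(\D(r))$ is assumed solvable at $(U^\ast,M^\ast,\lambda^\ast)$, that result yields an optimal Lagrange multiplier $w^\ast\in\R^n$ for the equality constraint together with a saddle-point relation between $w^\ast$ and $(U^\ast,M^\ast,\lambda^\ast)$; this proves the first assertion.

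For the \emph{moreover} claim I would compute the Lagrangian dual function $\Psi$ of $(\D(r))$ in the multiplier $w$. Regrouping, the supremum over $\lambda\in\R$ forces $\1^{\mathsf{T}}w=1$, the supremum over $U\in\mathcal{D}(\mathcal{Q}_1)$ reconstitutes $\rho_1(w^{\mathsf{T}}X)$, and the supremum over the cone $\cone(\mathcal{D}(\mathcal{Q}_2))$ equals $0$ when $\rho_2(w^{\mathsf{T}}X)\le r$ and $+\infty$ otherwise; hence $\Psi(w)=\rho_1(w^{\mathsf{T}}X)$ on the feasible set of $(\P(r))$ and $\Psi(w)=+\infty$ elsewhere. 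For any $w^\ast$ arising as an optimal multiplier, the saddle relation gives $\Psi(w^\ast)$ equal to the common optimal value; finiteness then forces $w^\ast\in\W$ and $\rho_2((w^\ast)^{\mathsf{T}}X)\le r$, while $\rho_1((w^\ast)^{\mathsf{T}}X)=\Psi(w^\ast)$ coincides with the value of $(\P(r))$. Thus every such $w^\ast$ is feasible and optimal for $(\P(r))$, as claimed.
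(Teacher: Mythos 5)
Your proposal is correct, and its overall architecture matches the paper's: Slater's condition gives zero duality gap between $(\P(r))$ and its Lagrange dual, Sion's minimax theorem applied to the $\sigma(L^q,L^p)$-compact density sets yields the reformulation, the substitution $M=\nu V$ produces $(\D(r))$, and the quasi-relative-interior qualification of Borwein--Lewis delivers the optimal multiplier $w^\ast$ for the equality constraint. Where you genuinely diverge is the final step, and your route is the more economical one. The paper verifies the Karush--Kuhn--Tucker conditions of $(\P(r))$ at $w^\ast$: it proves a separate subdifferential lemma (Lemma~\ref{subdifflemma}) identifying $\partial g_j(w)$ with $\cb{\E\sqb{-VX}\mid V\in\mathscr{V}_j(w)}$, extracts from the first-order conditions in $U$, $M$, $\lambda$ that $\E\sqb{-U^\ast X}\in\partial g_1(w^\ast)$, $\E\sqb{-M^\ast X}\in\nu^\ast\partial g_2(w^\ast)$ and $\1^{\mathsf{T}}w^\ast=1$, and then appeals to a KKT sufficiency theorem. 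You instead evaluate the bidual function $\Psi(w)$ directly: the three suprema decouple, the one over $\lambda$ forces $\1^{\mathsf{T}}w=1$, the one over $\mathcal{D}(\mathcal{Q}_1)$ reconstitutes $\rho_1(w^{\mathsf{T}}X)$, and the conic supremum produces the indicator of $\rho_2(w^{\mathsf{T}}X)\leq r$, so $\Psi$ is exactly the essential objective of $(\P(r))$ and $\Psi(w^\ast)=p<+\infty$ immediately gives feasibility and optimality. This bypasses Lemma~\ref{subdifflemma} and all of the subdifferential calculus; what the paper's longer route buys is an explicit KKT certificate, including the identification of the maximizing dual densities $U^\ast\in\mathscr{V}_1(w^\ast)$ and $V^\ast\in\mathscr{V}_2(w^\ast)$, which the theorem statement does not actually require. (Your derivation of $(\D(r))$ also keeps $w\in\W$ as a hard constraint in the inner minimization rather than dualizing it with $\lambda$ first, as the paper does; both land on the same problem.)
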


We use the following lemma for the proof of Theorem~\ref{dualitythm}, which should be known. For completeness, we present its short proof.

\begin{lemma}\label{subdifflemma}
	Let $w\in\R^n$, $j\in\cb{1,2}$, and define the attainment set
	\begin{equation}\label{scriptv}
	\mathscr{V}_j(w)\coloneqq \argmax_{V\in\mathcal{D}(\mathcal{Q}_j)} \E\sqb{-VX^{\mathsf{T}}w}.
	\end{equation}
	Then, one has
	\[
	\partial g_j(w)= \cb{\E\sqb{-VX}\mid V\in \mathscr{V}_j(w)}.
	\]
\end{lemma}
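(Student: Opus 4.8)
The plan is to recognize $g_j$ as the support function of a compact convex subset of $\R^n$ and to read off its subdifferential as the face exposed by $w$. Set
$C_j\coloneqq\cb{\E\sqb{-VX}\mid V\in\mathcal{D}(\mathcal{Q}_j)}\subseteq\R^n$.
Since $\E\sqb{-Vw^{\mathsf{T}}X}=\of{\E\sqb{-VX}}^{\mathsf{T}}w$, we may write $g_j(w)=\max_{c\in C_j}c^{\mathsf{T}}w$, so that $g_j$ is precisely the support function of $C_j$. Under this identification the attainment set $\mathscr{V}_j(w)$ is the preimage, through the linear map $V\mapsto\E\sqb{-VX}$, of the exposed face $F_j(w)\coloneqq\cb{c\in C_j\mid c^{\mathsf{T}}w=g_j(w)}$; in particular $V\mapsto\E\sqb{-VX}$ maps $\mathscr{V}_j(w)$ \emph{onto} $F_j(w)$, and so the claimed identity reduces to showing $\partial g_j(w)=F_j(w)$.

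First I would check that $C_j$ is a nonempty, compact, convex subset of $\R^n$. Convexity follows from the convexity of $\mathcal{D}(\mathcal{Q}_j)$ together with the linearity of $V\mapsto\E\sqb{-VX}$. For compactness, each coordinate map $V\mapsto\E\sqb{-VX_i}=-\ip{X_i,V}$ is $\sigma(L^q,L^p)$-continuous because $X_i\in L^p$; hence $V\mapsto\E\sqb{-VX}$ is continuous from $(L^q,\sigma(L^q,L^p))$ into $\R^n$, and $C_j$ is the continuous image of the $\sigma(L^q,L^p)$-compact set $\mathcal{D}(\mathcal{Q}_j)$. The same reasoning covers the case $p=+\infty$ with $q=1$ and the topology $\sigma(L^1,L^\infty)$.

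With $C_j$ compact and convex, I would establish $\partial g_j(w)=F_j(w)$ by two inclusions. The inclusion $F_j(w)\subseteq\partial g_j(w)$ is immediate: for $c\in F_j(w)$ and any $w'\in\R^n$ one has $g_j(w')\geq c^{\mathsf{T}}w'=g_j(w)+c^{\mathsf{T}}(w'-w)$. For the converse, let $y\in\partial g_j(w)$. If $y\notin C_j$, then, since $C_j$ is closed and convex, a separating hyperplane supplies a direction $d\in\R^n$ with $y^{\mathsf{T}}d>\max_{c\in C_j}c^{\mathsf{T}}d=g_j(d)$; substituting $w'=w+td$ ($t>0$) into the subgradient inequality and using that $g_j$ is positively homogeneous and subadditive (as $w\mapsto\rho_j(w^{\mathsf{T}}X)$ with $\rho_j$ coherent) yields $g_j(d)\geq y^{\mathsf{T}}d$, a contradiction; hence $y\in C_j$. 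Taking $w'=0$ in the subgradient inequality gives $y^{\mathsf{T}}w\geq g_j(w)$, while $y\in C_j$ gives $y^{\mathsf{T}}w\leq g_j(w)$, so $y\in F_j(w)$.

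The main obstacle is exactly this reverse inclusion $\partial g_j(w)\subseteq F_j(w)$: it is the only step that uses the closedness (compactness) and convexity of $C_j$, via the separation argument, and it is what guarantees that every subgradient is representable as $\E\sqb{-VX}$ for some attaining $V$. Everything else --- rewriting $g_j$ as a support function, the easy inclusion, and translating $\partial g_j(w)=F_j(w)$ back through the surjection $\mathscr{V}_j(w)\to F_j(w)$ to obtain $\partial g_j(w)=\cb{\E\sqb{-VX}\mid V\in\mathscr{V}_j(w)}$ --- is routine bookkeeping. (The support-function subdifferential formula invoked here is standard; see, e.g., \citealp{zalinescu}.)
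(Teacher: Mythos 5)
Your proof is correct, but it takes a genuinely different route from the paper's. The paper works in the infinite-dimensional space $L^q$: it writes $g_j=\rho_j\circ A$ for the linear operator $Aw=X^{\mathsf{T}}w$, applies the subdifferential chain rule $\partial g_j(w)=\{A^\ast V\mid V\in\partial\rho_j(Aw)\}$, computes the conjugate $\rho_j^\ast=I_{-\mathcal{D}(\mathcal{Q}_j)}$ via Sion's minimax theorem, and then identifies $\partial\rho_j^\ast(V)$ with a normal cone to translate $V\in\partial\rho_j(X^{\mathsf{T}}w)$ into the attainment condition. You instead push everything forward to the finite-dimensional set $C_j=\{\E[-VX]\mid V\in\mathcal{D}(\mathcal{Q}_j)\}\subseteq\R^n$, recognize $g_j$ as its support function, and prove $\partial g_j(w)=F_j(w)$ by hand with a separation argument; the only infinite-dimensional input is the $\sigma(L^q,L^p)$-continuity of $V\mapsto\E[-VX]$, which makes $C_j$ compact. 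Your argument is more elementary and self-contained --- it avoids the minimax theorem and subdifferential calculus in $L^q$ entirely, and the surjection $\mathscr{V}_j(w)\to F_j(w)$ is checked directly from the definitions. What the paper's route buys is the intermediate identity $\rho_j^\ast=I_{-\mathcal{D}(\mathcal{Q}_j)}$, which makes explicit the link between the subdifferential and the dual representation of the risk measure itself (not just of its composition with $A$); but for the statement of the lemma as given, your approach is complete and arguably cleaner.
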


\begin{proof}
	Since $\rho_j$ is continuous from below, $\mathscr{V}_j(w)\neq \emptyset$. Note that the linear operator $A\colon\R^n\to L^p$ defined by $Aw^\prime\coloneqq X^{\mathsf{T}}w^\prime$ for $w^\prime\in\R^n$ has the adjoint operator $A^\ast\colon L^q\to \R^n$ given by $A^\ast V= \E\sqb{VX}$ for $V\in L^q$. (We consider the $\sigma(L^\infty,L^1)$ topology on $L^\infty$ when $p=+\infty$ so that the dual space of $L^\infty$ is $L^1$.) Let $w\in\R^n$. Since we have $g_j=\rho_j\circ A$ and $\rho_j$ is continuous on $L^p$, by subdifferential calculus rules, e.g.,  \citet[Theorem~2.8.3(iii)]{zalinescu}, 
	\[
	\partial g_j(w)= \cb{A^\ast V\mid V\in \partial \rho_j(Aw)}=\cb{\E\sqb{VX}\mid V\in \partial \rho_j(X^{\mathsf{T}}w)}.
	\]
	On the other hand, since $\rho_j$ is continuous, for each $Y\in L^p$ and $V\in L^q$, we have $V\in \partial \rho_j(Y)$ if and only if $Y\in \partial \rho_j^\ast(V)$. On the other hand, for each $V\in L^q$,
	\begin{align*}
	\rho_j^\ast(V)&=\sup_{Y^\prime\in L^p}\of{\E\sqb{VY^\prime}-\rho_j(Y^\prime)}=\sup_{Y^\prime\in L^p}\of{\E\sqb{VY^\prime}+\inf_{V^\prime\in \mathcal{D}(\mathcal{Q}_j)}\E\sqb{V^\prime Y^\prime}}\\
	&=\inf_{V^\prime\in \mathcal{D}(\mathcal{Q}_j)}\sup_{Y^\prime\in L^p}\E\sqb{(V+V^\prime)Y^\prime}=\inf_{V^\prime\in \mathcal{D}(\mathcal{Q}_j)}I_{\cb{-V^\prime}}(V)= I_{-\mathcal{D}(\mathcal{Q}_j)}(V),
	\end{align*}
	where we use the minimax theorem \cite[Corollary~3.3]{sion} for the third equality thanks to the fact that $\mathcal{D}(\mathcal{Q}_j)$ is a convex $\sigma(L^q,L^p)$-compact set. As a result,
	\begin{align*}
	\partial\rho_j^\ast(V)=\mathcal{N}_{-\mathcal{D}(\mathcal{Q}_j)}(V)	
	=\cb{Y\in L^p\mid \forall V^\prime\in\mathcal{D}(\mathcal{Q}_j)\colon \E\sqb{VY}\geq \E\sqb{-V^\prime Y}}
	\end{align*}
	if $V\in -\mathcal{D}(\mathcal{Q}_j)$ and $\partial\rho_j^\ast(V)=\emptyset$ if $V\in L^q\sm -\mathcal{D}(\mathcal{Q}_j)$.
	Consequently,
	\begin{align*}
	\partial g_j(w)&=\cb{\E\sqb{VX}\mid V\in \partial \rho_j(X^{\mathsf{T}}w)}\\
	&=\cb{\E\sqb{VX}\mid V\in -\mathcal{D}(\mathcal{Q}_j), \ X^\mathsf{T}w\in\partial\rho_j^\ast(V)}\\
	&=\cb{\E\sqb{VX}\mid V\in -\mathcal{D}(\mathcal{Q}_j),\forall V^\prime\in\mathcal{D}(\mathcal{Q}_j)\colon \E\sqb{VX^\mathsf{T}w}\geq \E\sqb{-V^\prime X^\mathsf{T}w}}\\
	&=\cb{\E\sqb{-VX}\mid V\in \mathcal{D}(\mathcal{Q}_j),\forall V^\prime\in\mathcal{D}(\mathcal{Q}_j)\colon \E\sqb{-VX^\mathsf{T}w}\geq \E\sqb{-V^\prime X^\mathsf{T}w}}\\
	&=\cb{\E\sqb{-VX}\mid V\in \mathscr{V}_j(w)}
	\end{align*}
	so that the result follows.
\end{proof}

\begin{proof}[Proof of Theorem~\ref{dualitythm}]
Let us denote by $p$ the optimal value of $(\P(r))$. Thanks to Assumption~\ref{slater}, by strong duality for convex optimization (for instance, by \citet[Theorem~2.9.3]{zalinescu}), $p$ is equal to the optimal value of the corresponding Lagrange dual problem, that is,
\begin{equation}\label{ldual}
p=\sup_{\nu\geq 0,\lambda\in\R}d(\nu,\lambda),
\end{equation}
where, for each $\nu\geq0,\lambda\in\R$,
\[
d(\nu,\lambda)\coloneqq \inf_{w\in\R^n}\of{\rho_1(w^{\mathsf{T}}X)+\nu\of{\rho_2(w^{\mathsf{T}}X)-r}+\lambda\of{\1^{\mathsf{T}}w-1}}.
\]

Let us fix $\nu\geq 0, \lambda\in\R$. Using the dual representations of $\rho_1,\rho_2$,
\begin{align*}
d(\nu,\lambda)
=\inf_{w\in\R^n}\of{\max_{U\in\mathcal{D}(\mathcal{Q}_1)}\E\sqb{-Uw^{\mathsf{T}}X}+\nu\max_{V\in\mathcal{D}(\mathcal{Q}_2)}\E\sqb{-Vw^{\mathsf{T}}X}+\lambda\1^{\mathsf{T}}w}-r\nu -\lambda.
\end{align*}
Let $f(w,U,V)\coloneqq \E\sqb{-Uw^{\mathsf{T}}X}+\nu\E\sqb{-Vw^{\mathsf{T}}X}+\lambda\1^{\mathsf{T}}w$ for each $w\in\R^n,U\in\mathcal{D}(\mathcal{Q}_1),V\in\mathcal{D}(\mathcal{Q}_2)$. Note that $w\mapsto f(w,U,V)$ is convex (affine) and continuous, $(U,V)\mapsto f(w,U,V)$ is concave (affine) and $\sigma(L^q,L^p)$-continuous (continuous), and $\mathcal{D}(\mathcal{Q}_1)\times\mathcal{D}(\mathcal{Q}_2)$ is convex and $\sigma(L^q,L^p)$-compact. Hence, the classical minimax theorem \cite[Corollary~3.3]{sion} ensures that
\begin{align*}
d(\nu,\lambda)&=\sup_{(U,V)\in\mathcal{D}(\mathcal{Q}_1)\times\mathcal{D}(\mathcal{Q}_2)}\inf_{w\in\R^n}\of{\E\sqb{-Uw^{\mathsf{T}}X}+\nu\E\sqb{-Vw^{\mathsf{T}}X}+\lambda\1^{\mathsf{T}}w}-r\nu -\lambda \\
&=\sup_{(U,V)\in\mathcal{D}(\mathcal{Q}_1)\times\mathcal{D}(\mathcal{Q}_2)}\inf_{w\in\R^n}\of{\E\sqb{-UX}+\nu\E\sqb{-VX}+\lambda\1}^{\mathsf{T}}w-r\nu -\lambda.
\end{align*}
Clearly, for every $(U,V)\in\mathcal{D}(\mathcal{Q}_1)\times\mathcal{D}(\mathcal{Q}_2)$,
\begin{equation}\label{infeq}
\inf_{w\in\R^n}\of{\E\sqb{-UX}+\nu\E\sqb{-VX}+\lambda\1}^{\mathsf{T}}w=\begin{cases}0&\text{if }\E\sqb{-UX}+\nu\E\sqb{-VX}+\lambda\1=0,\\ -\infty&\text{else}.\end{cases}
\end{equation}
It follows that
\[
d(\nu,\lambda)=\begin{cases}- r\nu -\lambda&\text{if }\exists(U,V)\in\mathcal{D}(\mathcal{Q}_1)\times\mathcal{D}(\mathcal{Q}_2)\colon \E\sqb{-UX}+\nu\E\sqb{-VX}+\lambda\1=0,\\ -\infty&\text{else}.\end{cases}
\]
So the Lagrange dual problem in \eqref{ldual} takes the more explicit form
\begin{align*}
&\text{maximize}\; \;\;                  -r\nu -\lambda\tag{$\tilde{\D}(r)$}\\
&\text{subject to}\;\;                   \E\sqb{UX}+\nu\E\sqb{VX}-\lambda\1 =0\\
& \quad\quad\quad\quad\;\;\;      U\in\mathcal{D}(\mathcal{Q}_1),\; V\in\mathcal{D}(\mathcal{Q}_2),\; \nu\geq 0,\; \lambda\in\R.
\end{align*}
To avoid the multiplication of the variables $\nu,V$, we make the following change of variables: Note that if $M\in\cone(\mathcal{D}(\mathcal{Q}_2))$, then there exist $\nu\geq 0$ and $V\in\mathcal{D}(\mathcal{Q}_2)$ such that $M=\nu V$: we simply take $\nu=\E\sqb{M}$, and $V=\frac{M}{\nu}$ if $\nu>0$ and an arbitrary $V\in\mathcal{D}(\mathcal{Q}_2)$ if $\nu=0$. Conversely, if $\nu\geq 0$ and $V\in\mathcal{D}(\mathcal{Q}_2)$, then $M=\nu V\in\cone(\mathcal{D}(\mathcal{Q}_2))$. These observations allow us to reformulate $(\tilde{\D}(r))$ as $(\D(r))$. Note that both problems have $p$ as their optimal value.

Let $(U^\ast,M^\ast, \lambda^\ast)\in L^q\times L^q\times \R$ be an optimal solution for $\mathscr{D}(r)$. Thanks to Assumption~\ref{quasirelint} and \citet[Corollary~4.8]{borwein}, there is strong duality with the corresponding Lagrange dual problem that relaxes the equality constraint, that is, we have
\begin{align*}
p&=\inf_{w\in\R^n}\sup_{U\in\mathcal{D}(\mathcal{Q}_1),M\in\cone(\mathcal{D}(\mathcal{Q}_2)),\lambda\in\R}\of{-r\E\sqb{M}-\lambda-w^{\mathsf{T}}\of{\E\sqb{UX}+\E\sqb{MX}-\lambda\1}}\\
&=\inf_{w\in\R^n}\sup_{U\in\mathcal{D}(\mathcal{Q}_1),M\in\cone(\mathcal{D}(\mathcal{Q}_2)),\lambda\in\R}\of{-r\E\sqb{M}-\lambda+\E\sqb{-Uw^{\mathsf{T}}X}+\E\sqb{-Mw^{\mathsf{T}}X}+\lambda w^{\mathsf{T}}\1}.
\end{align*}
Moreover, \citet[Corollary~4.8]{borwein} also ensures that there exists an optimal Lagrange multiplier $w^\ast\in\R^n$. By the first-order condition with respect to $U=U^\ast$, we have
\[
0\in -(w^\ast)^{\mathsf{T}}X -\mathcal{N}_{\mathcal{D}(\mathcal{Q}_1)}(U^\ast),
\]
which means that
\[
\E\sqb{-U^\ast(w^\ast)^{\mathsf{T}}X}\geq \E\sqb{-U^\prime(w^\ast)^{\mathsf{T}}X }
\]
for every $U^\prime\in\mathcal{D}(\mathcal{Q}_1)$, that is,
\[
\rho_1((w^\ast)^\mathsf{T}X)=\E\sqb{-U^\ast (w^\ast)^{\mathsf{T}}X}.
\]
We conclude that $U^\ast\in\mathscr{V}_1(w^\ast)$, where $\mathscr{V}_1(w^\ast)$ is defined by \eqref{scriptv}. In particular, by Lemma~\ref{subdifflemma},
\begin{equation}\label{subdiff1}
\E\sqb{-U^\ast X}\in\partial g_1(w^\ast).
\end{equation}

Similarly, the first-order condition with respect to $M=M^\ast$ yields
\[
\E\sqb{-M^\ast\of{(w^\ast)^{\mathsf{T}}X+r}}\geq \E\sqb{-M^\prime\of{(w^\ast)^{\mathsf{T}}X+r} }
\]
for every $M^\prime\in\cone(\mathcal{D}_2)$, that is,
\begin{equation}\label{someeq}
\E\sqb{-M^\ast\of{(w^\ast)^{\mathsf{T}}X+r}}=\max_{M^\prime\in\cone(\mathcal{D}(\mathcal{Q}_2))}\E\sqb{-M^\prime\of{(w^\ast)^{\mathsf{T}}X+r} }.
\end{equation}
Since $\cone(\mathcal{D}_2)$ is a cone, the quantity $\sup_{M^\prime\in\cone(\mathcal{D}(\mathcal{Q}_2))}\E\sqb{-M^\prime((w^\ast)^{\mathsf{T}}X+r)}$ can either take value $0$ or $+\infty$. Since $\E\sqb{-M^\ast((w^\ast)^{\mathsf{T}}X+r)}$ is a finite number, both sides of \eqref{someeq} must be equal to zero. Moreover,
\begin{align*}
0=\max_{M^\prime\in\cone(\mathcal{D}(\mathcal{Q}_2))}\E\sqb{-M^\prime\of{(w^\ast)^{\mathsf{T}}X+r}}&=\of{\sup_{\lambda^\prime\geq 0}\lambda^\prime}\of{\max_{V^\prime\in\mathcal{D}(\mathcal{Q}_2)}\E\sqb{-V^\prime\of{(w^\ast)^{\mathsf{T}}X+r}}}\\
&=+\infty\cdot \rho_2((w^\ast)^{\mathsf{T}}X+r)=+\infty\cdot \of{\rho_2((w^\ast)^{\mathsf{T}}X)-r}.
\end{align*}
Hence, we have $\rho_2((w^\ast)^{\mathsf{T}}X)=r$.

Let $\nu^\ast=\E\sqb{M^\ast}$. Suppose first that $\nu^\ast>0$. Let $V^\ast\coloneqq \frac{M^\ast}{\nu^\ast}\in\mathcal{D}(\mathcal{Q}_2)$. Then,
\[
\E\sqb{-M^\ast((w^\ast)^{\mathsf{T}}X+r)}=\nu^\ast\E\sqb{-V^\ast\of{(w^\ast)^{\mathsf{T}}X+r}}=0
\]
so that $\E\sqb{-V^\ast(w^\ast)^{\mathsf{T}}X}=r$. Hence,
\[
\E\sqb{-V^\ast(w^\ast)^{\mathsf{T}}X}=r=\rho_2((w^\ast)^{\mathsf{T}}X)=\max_{V^\prime\in\mathcal{D}(\mathcal{Q}_2)}\E\sqb{-V^\prime(w^\ast)^{\mathsf{T}}X},
\]
that is, $V^\ast\in\mathscr{V}_2(w^\ast)$. In particular,
\[
\E\sqb{-V^\ast X}\in\partial g_2(w^\ast).
\]
Next, suppose that $\nu^\ast=0$, that is, $M^\ast=0$ $\Pr$-almost surely. Let us pick some $V^\ast\in\mathscr{V}_2(w^\ast)$ arbitrarily. (We know that $\mathscr{V}_2(w^\ast)\neq \emptyset$ since $\rho_2$ is assumed to be continuous from below.) In both cases, we may write $M^\ast=\nu^\ast V^\ast$ and
\begin{equation}\label{subdiff2}
\E\sqb{-M^\ast X}=\nu^\ast\E\sqb{-V^\ast X}\in\nu^\ast\partial g_2(w^\ast).
\end{equation}

By the feasibility of $(U^\ast,M^\ast,\lambda^\ast)$ for $(\D(r))$, 
\begin{equation}\label{suff}
\E\sqb{-U^\ast X}+\E\sqb{-M^\ast X}+\lambda^\ast\1=\E\sqb{-U^\ast X}+\nu^\ast\E\sqb{-V^\ast X}+\lambda^\ast\1 = 0.
\end{equation}
Hence, by \eqref{subdiff1}, \eqref{subdiff2}, \eqref{suff}, we conclude that
\[
0\in \partial g_1(w^\ast)+\nu^\ast \partial g_2(w^\ast)+\lambda^\ast\1.
\]
Finally, by the first-order condition with respect to $\lambda=\lambda^\ast$, we get
\[
\1^\mathsf{T}w^\ast=1,
\]
that is, $w^\ast\in\W$. Therefore, we establish the Karush-Kuhn-Tucker conditions for $(\P(r))$ at $w=w^\ast$. By \citet[Theorem~2.9.3]{zalinescu}, we conclude that $w^\ast$ is an optimal solution for $(\mathscr{P}(r))$.
\end{proof}

\begin{remark}
	Let us comment on the roles of the two constraint qualifications and the assumption about the existence of an optimal solution for $(\D(r))$. In the proof of Theorem~\ref{dualitythm}, note that Assumption~\ref{slater} already guarantees the existence of an optimal solution $(\bar{\nu},\bar{\lambda})$ for the Lagrange dual problem in \eqref{ldual}. Nevertheless, the reformulated problem $(\tilde{\D}(r))$ has two additional variables, $U$ and $V$, which, together with $\nu,\lambda$, combine into $U,M,\lambda$ in the finalized dual problem $(\D(r))$. As a result, the existence of an optimal solution for $(\D(r))$ is not guaranteed \emph{a priori}. Once such an optimal solution is assumed, Assumption~\ref{quasirelint} automatically yields the existence of an optimal Lagrange multiplier for the equality constraint in $(\D(r))$, which is shown to give an optimal solution for the original problem $(\P(r))$.
\end{remark}

In the following examples, we consider a few special choices of the risk measures. We work with $p=1$ in all examples.

\begin{example}\label{example1}
	Let $\rho_1$ be the average value-at-risk at a probability level $\theta\in(0,1)$ (Example~\ref{cvar}) and $\rho_2$ the negative expected value (Example~\ref{expectation}). In this case:
	\begin{align*}
	&\mathcal{D}(\mathcal{Q}_1)=\cb{U\in L^\infty\mid \Pr\cb{0\leq U\leq \frac{1}{\theta}}=1},\quad \qri(\mathcal{D}(\mathcal{Q}_1))=\cb{U\in L^\infty\mid \Pr\cb{0< U< \frac{1}{\theta}}=1},\\
	&\cone(\mathcal{D}(\mathcal{Q}_2))=\cone({1})=\R_+,\quad\quad \quad \quad\quad \quad \quad \qri(\cone(\mathcal{D}(\mathcal{Q}_2)))=(0,+\infty),
		\end{align*}
where the quasi relative interiors can be calculated by following a similar procedure as in \citet[Example~3.11]{borwein}. It is easy to observe that Assumption~\ref{quasirelint} is equivalent to the existence of a probability measure $\Q$ on $(\O,\F)$ that is equivalent to $\Pr$ such that $\frac{d\Q}{d\Pr}\leq \frac{1}{\theta}$ $\Pr$-almost surely and $\E^\Q\sqb{-X}$ is in the conic convex hull of the set $\cb{\E\sqb{X},\1,-\1}$. In particular, if $\E\sqb{X_1}=\ldots=\E\sqb{X_n}$, then this condition is satisfied by $\Q=\Pr$. Moreover, the dual problem $(\D(r))$ becomes
 \begin{align*}
 &\text{maximize}\; \;               -r m -\lambda\\
 &\text{subject to}\;\;                   \E\sqb{UX}+m\E\sqb{X}-\lambda\1=0\\
  & \quad\quad\quad\quad\;\;\;     \E\sqb{U}=1\\
  & \quad\quad\quad\quad\;\;\;     0\leq U\leq\frac{1}{\theta}\; \Pr\text{-almost surely}\\
 & \quad\quad\quad\quad\;\;\;      U\in L^\infty,\; m\geq 0,\; \lambda\in\R,
 \end{align*}
which is a linear programming problem in an infinite-dimensional setting. When $(\O,\F,\Pr)$ is a finite probability space, it reduces to a finite-dimensional linear programming problem.
	\end{example}

\begin{example}\label{example2}
	We switch the roles of negative expected value and average value-at-risk in Example~\ref{example1} so that
	\begin{align*}
	&\mathcal{D}(\mathcal{Q}_1)=\qri(\mathcal{D}(\mathcal{Q}_1))=\cb{1}\subseteq L^\infty,\\
	&\cone(\mathcal{D}(\mathcal{Q}_2))=\cb{M\in L^\infty\mid \Pr\cb{0\leq M\leq \frac{\E\sqb{M}}{\theta}}=1},\\
	& \qri(\cone(\mathcal{D}(\mathcal{Q}_2)))=\cb{M\in L^\infty\mid \Pr\cb{0< M< \frac{\E\sqb{M}}{\theta}}=1}.
	\end{align*}
	In this case, Assumption~\ref{quasirelint} is equivalent to the existence of a finite measure with density $M$ such that $\theta M<\E\sqb{M}$ $\Pr$-almost surely and $\E\sqb{-MX}$ is in the unbounded polyhedral set $\{\E\sqb{X}-\lambda\1\mid \lambda\in\R\}$. In particular, if $\E\sqb{X_1}=\ldots=\E\sqb{X_n}$, then this condition is satisfied by $M\equiv 1$. Moreover, the dual problem $(\D(r))$ becomes
	\begin{align*}
	&\text{maximize}\; \;               -r \E\sqb{M} -\lambda\\
	&\text{subject to}\;\;                   \E\sqb{X}+\E\sqb{MX}-\lambda\1=0\\
	& \quad\quad\quad\quad\;\;\;     0\leq \theta M\leq\E\sqb{M}\; \Pr\text{-almost surely}\\
	& \quad\quad\quad\quad\;\;\;      M\in L^\infty,\; \lambda\in\R,
	\end{align*}
	which reduces to a finite-dimensional linear programming problem when $(\O,\F,\Pr)$ is a finite probability space.
	\end{example}

\begin{remark}\label{dualslaterremark}
	Note that Assumption~\ref{quasirelint} is a constraint qualification for the dual problem $(\D(r))$. A more standard alternative of it would assume the existence of $U\in\interior(\mathcal{D}(\mathcal{Q}_1))$, $M\in\interior(\cone(\mathcal{D}(\mathcal{Q}_2)))$, $\lambda\in\R$ such that $\E\sqb{UX}+\E\sqb{MX}-\lambda\1=0$, where $\interior$ denotes topological interior. However, in infinite-dimensional spaces, many convex sets that show up in applications have empty interior. For instance, it is well-known that, for $q\in [1,+\infty)$, we have $\interior L^q_+=\emptyset$ unless $L^q$ is a finite-dimensional space, that is, the underlying probability space is isomorphic to a finite probability space (see, for instance, \citet[Example~10.1.3]{dynamics} and \citet[Example~2.12]{gluck}). Similarly, we have $\interior(\mathcal{D}(\mathcal{Q}_1))=\emptyset$ in Example~\ref{example1} and $\interior(\cone(\mathcal{D}(\mathcal{Q}_2)))=\emptyset$ in Example~\ref{example2} unless $L^q$ is a finite-dimensional space. Hence, the standard alternative is never satisfied for our examples whenever we deviate from the finite-dimensional case. Hence, Assumption~\ref{quasirelint}, which uses quasi relative interior instead of usual interior, is much weaker than its interior-based counterpart (see, for instance, \citet[Theorem~2.9.6]{zalinescu} for a strong duality theorem with an interior-based constraint qualification).
\end{remark}

\begin{remark}\label{solvability}
	 Let us comment on the benefit of Theorem~\ref{dualitythm} for computing an optimal solution for the primal portfolio optimization problem $(\P(r))$. When $(\O,\F,\Pr)$ is a finite probability space, the dual problem $(\D(r))$ reduces to a finite-dimensional convex optimization problem with the set constraints $U\in \mathcal{D}(\mathcal{Q}_1)$ and $M\in \cone(\mathcal{D}(\mathcal{Q}_2))$. If these constraints can be represented by finitely many (convex) inequalities, then $(\D(r))$ can be solved by commercial software for convex optimization such as CVX, which also return the value of the Lagrange multipliers for the constraints at (approximate) optimality. Hence, without an additional procedure, an optimal solution for $(\P(r))$ is readily computed by the solver as the dual multiplier of the constraint $\E\sqb{UX}+\E\sqb{MX}-\lambda\1=0$ at optimality. As noted in Example~\ref{example1} and Example~\ref{example2}, when the classical coherent risk measures negative expected value and average value-at-risk are used in the problem, the set constraints $U\in \mathcal{D}(\mathcal{Q}_1)$ and $M\in \cone(\mathcal{D}(\mathcal{Q}_2))$ are easily represented by finitely many linear inequalities so that $(\D(r))$ even reduces to a linear programming problem. In this case, many more commerical solvers are available (for instance, CPLEX, Gurobi) and they also return the values of the dual variables associated to the constraints at optimality. Hence, in the cases where $(\D(r))$ reduces to a standard convex/linear optimization problem and it has an optimal solution, thanks to Theorem~\ref{dualitythm}, an optimal portfolio for $(\P(r))$ is returned by commercial solvers. It should also be noted that the idea of recovering a primal optimal solution from the dual multipliers of the dual problem is pretty well-known in the nonsmooth/stochastic optimization literature (see, \cite{barahona}, \cite{larsson}, for instance). Hence, at a high level, Theorem~\ref{dualitythm} can be considered as a result in the same spirit.
	\end{remark}

We finish this section by providing an analogous dual problem and an optimality result for the case where shortselling is not allowed, namely, for the problem
\begin{align*}
&\text{minimize}\; \;\;                  \rho_1(w^{\mathsf{T}}X)\tag{$\P_+(r)$}\\
&\text{subject to}\;\;                   \rho_2(w^{\mathsf{T}}X)\leq r \\
& \quad\quad\quad\quad\;\;\;      w\in \W_+,
\end{align*}
where $\W_+$ is defined by \eqref{posw}. The analysis of $(\P_+(r))$ is very similar to that of $(\P(r))$ and it yields the finalized dual problem
\begin{align*}
&\text{maximize}\; \;               -r\E\sqb{M} -\lambda\tag{$\D_+(r)$}\\
&\text{subject to}\;\;                   \E\sqb{UX}+\E\sqb{MX}-\lambda\1\leq 0\\
& \quad\quad\quad\quad\;\;\;      U\in\mathcal{D}(\mathcal{Q}_1),\; M\in\cone(\mathcal{D}(\mathcal{Q}_2)),\; \lambda\in\R.
\end{align*}
We have the following duality result which works under modified versions of Assumption~\ref{slater} and Assumption~\ref{quasirelint}.
\begin{theorem}\label{dualitythm2}
Assume that there exists $w\in\W_+$ such that $\rho_2(w^{\mathsf{T}}X)< r$ and $w_i>0$ for every $i\in\mathcal{N}$. Assume further that there exist $U\in\qri(\mathcal{D}(\mathcal{Q}_1))$, $M\in\qri(\cone(\mathcal{D}(\mathcal{Q}_2)))$, $\lambda\in\R$ such that $\E\sqb{UX_i}+\E\sqb{MX_i}-\lambda<0$ for every $i\in\mathcal{N}$. Suppose that there exists an optimal solution $(U^\ast,M^\ast, \lambda^\ast)\in L^q\times L^q\times \R$ of $(\mathscr{D}_+(r))$. Then, there exists an optimal Lagrange multiplier $w^\ast\in\R_+^n$ associated to the inequality constraint of $(\mathscr{D}_+(r))$, and $w^\ast$ is an optimal solution for $(\P_+(r))$.
\end{theorem}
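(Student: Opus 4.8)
The plan is to mirror the proof of Theorem~\ref{dualitythm} line by line, the only new feature being the sign constraints $w\in\R^n_+$ that, together with $\1^{\mathsf{T}}w=1$, define $\W_+$.

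First I would set up the Lagrangian of $(\P_+(r))$ by dualizing the risk constraint with a multiplier $\nu\geq0$, the budget equality $\1^{\mathsf{T}}w=1$ with $\lambda\in\R$, and each sign constraint $-w_i\leq 0$ with a multiplier $\eta_i\geq 0$. The first hypothesis of the theorem---a $w\in\W_+$ with $\rho_2(w^{\mathsf{T}}X)<r$ and $w_i>0$ for every $i\in\mathcal{N}$---is precisely a Slater point that is strictly feasible for every inequality constraint (the risk constraint and all the sign constraints), so strong duality holds and the optimal value $p$ of $(\P_+(r))$ equals that of the Lagrange dual. As in Theorem~\ref{dualitythm}, I would then substitute the dual representations of $\rho_1,\rho_2$, apply the minimax theorem \cite[Corollary~3.3]{sion} to interchange $\inf_{w}$ with the suprema over the density sets, and evaluate the remaining linear infimum over $w\in\R^n$; it is finite (and equal to $0$) if and only if $\E\sqb{-UX}+\nu\E\sqb{-VX}+\lambda\1-\eta=0$. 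Since $\eta\geq 0$ is otherwise free, eliminating it replaces this equality by the componentwise inequality $\E\sqb{UX}+\nu\E\sqb{VX}-\lambda\1\leq 0$, and the substitution $M=\nu V\in\cone(\mathcal{D}(\mathcal{Q}_2))$ used before turns this into exactly $(\D_+(r))$, whose value is again $p$.

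Given an optimal solution $(U^\ast,M^\ast,\lambda^\ast)$ of $(\D_+(r))$, I would invoke the second hypothesis, which is the quasi-relative-interior Slater condition for the inequality constraint of $(\D_+(r))$, together with \cite[Corollary~4.8]{borwein}, to obtain zero duality gap and an optimal Lagrange multiplier for that constraint. Because the relaxed constraint is an inequality with respect to the self-dual cone $\R^n_+$, its multiplier $w^\ast$ lies in $\R^n_+$. The first-order conditions in $U$ and $M$ are handled verbatim as in Theorem~\ref{dualitythm}: through Lemma~\ref{subdifflemma} they give $\E\sqb{-U^\ast X}\in\partial g_1(w^\ast)$ and, with $\nu^\ast=\E\sqb{M^\ast}$ and $M^\ast=\nu^\ast V^\ast$, $\E\sqb{-M^\ast X}\in\nu^\ast\partial g_2(w^\ast)$, as well as $\rho_2((w^\ast)^{\mathsf{T}}X)\leq r$ and $\nu^\ast(\rho_2((w^\ast)^{\mathsf{T}}X)-r)=0$; the condition in $\lambda$ forces $\1^{\mathsf{T}}w^\ast=1$, so $w^\ast\in\W_+$. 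In addition, the inequality constraint now yields complementary slackness $w_i^\ast(\E\sqb{U^\ast X_i}+\E\sqb{M^\ast X_i}-\lambda^\ast)=0$ for each $i$. The only genuinely new step is to convert this slackness into the sign-constraint part of the KKT system: writing $s^\ast\coloneqq\lambda^\ast\1-\E\sqb{U^\ast X}-\E\sqb{M^\ast X}\geq 0$ for the dual slack, I would note that $s^\ast=\E\sqb{-U^\ast X}+\E\sqb{-M^\ast X}+\lambda^\ast\1\in\partial g_1(w^\ast)+\nu^\ast\partial g_2(w^\ast)+\lambda^\ast\1$, while complementary slackness gives $s_i^\ast=0$ whenever $w_i^\ast>0$; since also $s^\ast\geq 0$, this means $-s^\ast\in\N_{\R^n_+}(w^\ast)$. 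Hence $0\in\partial g_1(w^\ast)+\nu^\ast\partial g_2(w^\ast)+\lambda^\ast\1+\N_{\R^n_+}(w^\ast)\subseteq\partial g_1(w^\ast)+\nu^\ast\partial g_2(w^\ast)+\N_{\W_+}(w^\ast)$, which together with $w^\ast\in\W_+$, $\nu^\ast\geq0$, and $\nu^\ast(\rho_2((w^\ast)^{\mathsf{T}}X)-r)=0$ is the KKT system for $(\P_+(r))$; by convexity (e.g., \citet[Theorem~2.9.3]{zalinescu}) $w^\ast$ is then optimal for $(\P_+(r))$.

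I expect the main obstacle to be bookkeeping rather than a deep idea: one must reduce the sign-constrained Lagrangian to the single inequality-constrained dual $(\D_+(r))$ while keeping the two constraint qualifications matched to their distinct roles---the primal-side strict feasibility $w_i>0$ that is needed to dualize the sign constraints and produce the multipliers $\eta_i$, and the dual-side strict componentwise inequality $\E\sqb{UX_i}+\E\sqb{MX_i}-\lambda<0$ that is needed for \cite[Corollary~4.8]{borwein}---and then recognize that the dual's complementary slackness is exactly what places $-s^\ast$ in $\N_{\R^n_+}(w^\ast)$. Everything else transcribes directly from the proof of Theorem~\ref{dualitythm}.
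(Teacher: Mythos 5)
Your proposal is correct and follows essentially the same route as the paper, whose own proof is only a two-line remark observing that the inner infimum is now taken over $\R^n_+$ (so the equality constraint of $(\D(r))$ becomes the inequality of $(\D_+(r))$) and that the rest transcribes from the proof of Theorem~\ref{dualitythm}. The only cosmetic difference is that you dualize the sign constraints with multipliers $\eta_i$ and then eliminate them, whereas the paper keeps $w\in\R^n_+$ as a domain constraint in the inner infimum; both yield $(\D_+(r))$, and your conversion of complementary slackness into $-s^\ast\in\N_{\R^n_+}(w^\ast)$ is precisely the ``standard manner'' the paper leaves implicit.
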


\begin{proof}
	The proof goes along the same lines as the proof of Theorem~\ref{dualitythm}. The only important change is in \eqref{infeq}:
	\[
	\inf_{w\in\R_+^n}\of{\E\sqb{-UX}+\nu\E\sqb{-VX}+\lambda\1}^{\mathsf{T}}w=\begin{cases}0&\text{if }\E\sqb{-UX}+\nu\E\sqb{-VX}+\lambda\1\geq 0,\\ -\infty&\text{else},\end{cases}
	\]
	which is the reason for having an inequality constraint in $(\D_+(r))$. The rest follows in a standard manner.
	\end{proof}

\section{The  portfolio optimization problem under the multivariate Gaussian distribution}\label{continuousproblem}

In this section, we study the problem $(\P(r))$ under the special case that $X$ is a Gaussian random vector and $\rho_1, \rho_2$ are law-invariant. Under these assumptions, it turns out that the analysis of $(\P(r))$ can be performed in terms of the hyperbola appearing in the classical Markowitz problem and an optimal solution for $(\P(r))$ can be calculated with an explicit formula whenever it exists. The aim of this section is to provide an analysis that is peculiar to the Gaussian case; hence, we follow a route that is quite different from the general duality-based approach in Section~\ref{arbitrary}.

We assume that $X=(X_1,\ldots,X_n)^{\mathsf{T}}\in L_n^2$ is a Gaussian random vector with mean vector $m=(m_1, \ldots, m_n)^\mathsf{T}$ and covariance matrix $C\in\R^{n\times n}$. We further assume that $m$ and $\1\coloneqq(1,\ldots,1)^{\mathsf{T}}\in\R^n$ are linearly independent and that $C$ is a nonsingular matrix with inverse $C^{-1}$. Hence, $C$ is a symmetric positive definite matrix with strictly positive eigenvalues.

Note that, for a portfolio $w\in\W$, its return $w^{\mathsf{T}}X$ is a Gaussian random variable. A simple calculation yields that the corresponding expected value and variance are given by
\begin{equation}\label{expvar}
\mu_w\coloneqq\E\sqb{w^{\mathsf{T}}X}=m^{\mathsf{T}}w,\quad \sigma^2_w\coloneqq\V(w^{\mathsf{T}}X)=w^{\mathsf{T}}Cw,
\end{equation}
respectively. For every $w\in\W$, we may write
\begin{equation}\label{standardization}
w^{\mathsf{T}}X=\E\sqb{w^{\mathsf{T}}X}+\sqrt{\V(w^{\mathsf{T}}X)}Z
\end{equation}
for some standard Gaussian random variable $Z$ (with zero mean and unit variance). Using this, we provide an explicit expression for the values of a generic law-invariant coherent risk measure $\rho$ next.

\begin{proposition}\label{reformula}
	Let $\rho$ be a coherent, law-invariant and finite risk measure on $L^2$. For every $w\in\W$, it holds
	\[
	\rho(w^{\mathsf{T}}X)=\rho(Z)\sqrt{w^{\mathsf{T}}Cw}-m^{\mathsf{T}}w,
	\]
	where $Z$ is an arbitrary standard Gaussian random variable.
\end{proposition}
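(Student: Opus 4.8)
The plan is to exploit the standardization identity \eqref{standardization}, which writes the portfolio return $w^{\mathsf{T}}X$ as an affine transformation of a standard Gaussian random variable, and then to strip off the affine part using the translativity and positive homogeneity axioms of a coherent risk measure. Concretely, I would fix $w\in\W$ and abbreviate $\mu_w=m^{\mathsf{T}}w$ and $\sigma_w=\sqrt{w^{\mathsf{T}}Cw}$ as in \eqref{expvar}. By \eqref{standardization} there is a standard Gaussian random variable $Z_w$ with $w^{\mathsf{T}}X=\mu_w+\sigma_w Z_w$. Since $C$ is positive definite and $w\neq 0$ (because $\1^{\mathsf{T}}w=1$), we have $\sigma_w>0$, so $\sigma_w Z_w\in L^2$ and, by finiteness, $\rho(\sigma_w Z_w)\in\R$.

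First I would apply translativity (property (ii)) with the constant $y=\mu_w$ to get $\rho(w^{\mathsf{T}}X)=\rho(\sigma_w Z_w+\mu_w)=\rho(\sigma_w Z_w)-\mu_w$. Then positive homogeneity (property (iv)) with the scalar $\lambda=\sigma_w\geq 0$ yields $\rho(\sigma_w Z_w)=\sigma_w\,\rho(Z_w)$, so that $\rho(w^{\mathsf{T}}X)=\sigma_w\,\rho(Z_w)-\mu_w$. Substituting back $\sigma_w=\sqrt{w^{\mathsf{T}}Cw}$ and $\mu_w=m^{\mathsf{T}}w$ already produces the asserted formula, provided one may replace $\rho(Z_w)$ by $\rho(Z)$ for the arbitrary standard Gaussian $Z$ in the statement.

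The only point that requires care — and the place where law-invariance is genuinely used — is precisely this replacement: the random variable $Z_w$ produced by \eqref{standardization} is $Z_w=(w^{\mathsf{T}}X-\mu_w)/\sigma_w$ and may depend on $w$, whereas the claim is stated with a fixed but arbitrary standard Gaussian $Z$. Since any two standard Gaussian random variables are identically distributed, law-invariance (property (ix)) gives $\rho(Z_w)=\rho(Z)$, so the particular choice is immaterial and $\rho(Z)$ is a well-defined real constant. I do not expect a real obstacle here: the argument is essentially a direct bookkeeping application of the coherence axioms to the Gaussian decomposition, with finiteness ensuring all quantities are real-valued and positive definiteness of $C$ guaranteeing $\sigma_w\geq 0$ so that positive homogeneity is applicable.
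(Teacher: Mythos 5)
Your proof is correct and follows essentially the same route as the paper's: apply the standardization $w^{\mathsf{T}}X=\mu_w+\sigma_w Z_w$, then translativity and positive homogeneity, then invoke law-invariance to see that $\rho(Z_w)=\rho(Z)$ is independent of the particular standard Gaussian chosen. Your additional remarks on $\sigma_w>0$ and finiteness are harmless extra care that the paper leaves implicit.
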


\begin{proof}
	Let $w\in\W$. Using \eqref{standardization}, we obtain
	\[
	\rho(w^{\mathsf{T}}X)=\rho\of{\sqrt{\V(w^{\mathsf{T}}X)}Z+\E\sqb{w^{\mathsf{T}}X}}=\rho(Z)\sqrt{\V(w^{\mathsf{T}}X)}-\E\sqb{w^{\mathsf{T}}X}
	\]
	thanks to the translativity and positive homogeneity of $\rho$. Finally, the number $\rho(Z)$ is free of the choice of the standard Gaussian random variable $Z$ thanks to the law-invariance of $\rho$.
\end{proof}

With a slight abuse of notation, we define $\rho_j\coloneqq \rho_j(Z)\geq 0$ for each $j\in\cb{1,2}$, where $Z$ is a generic standard Gaussian random variable. Thanks to Proposition~\ref{reformula}, we may rewrite $(\P(r))$ as 
\begin{align*}
&\text{minimize}\; \;\;                  \rho_1\sqrt{w^{\mathsf{T}}Cw}-m^{\mathsf{T}}w\tag{$\P(r)$}\\
&\text{subject to}\;\;                     \rho_2\sqrt{w^{\mathsf{T}}Cw}-m^{\mathsf{T}}w\leq r \\
& \quad\quad\quad\quad\;\;\;      \1^{\mathsf{T}}w=1\\
& \quad\quad\quad\quad\;\;\;      w\in\R^n.
\end{align*}

In what follows, we provide an analytical solution for $(\P(r))$, whenever it exists, under all possible relationships among the parameters $m, C, r, \rho_1, \rho_2$. To that end, let us introduce the constants
\begin{equation*}
\a\coloneqq m^{\mathsf{T}}C^{-1}m,\quad \b\coloneqq m^{\mathsf{T}}C^{-1}\1=\1^{\mathsf{T}}C^{-1}m,\quad \g\coloneqq \1^{\mathsf{T}}C^{-1}\1,\quad \delta\coloneqq \a\g-\b^2,
\end{equation*}
which also appear in the analysis of the classical Markowitz problem. As a consequence of the positive definiteness of $C$, it is well-known and easy to check that $\a,\g,\d>0$.

\subsection{The Markowitz hyperbola}

The analysis of the $n$-dimensional portfolio optimization problem $(\P(r))$ is based on an associated two-dimensional optimization problem whose decision variables stand for the standard deviation and expected return of a portfolio. Note that every portfolio $w\in\W$ induces a standard deviation-expected return pair $(\sigma_w,\mu_w)\in\R^2$ of $(\M(r))$ through the definitions $\sigma_w=\sqrt{w^{\mathsf{T}}Cw}$, $\mu_w=m^{\mathsf{T}}r$. The structure of the set $\cb{(\sigma_w,\mu_w)\mid w\in\W}$ is very well-known: this set is the convex hull of the right wing of a hyperbola. The precise version of this classical result is recalled in the next lemma.

\begin{lemma}\label{mw}
	Let $\mu\in\R$ and consider the problem of finding the portfolio with minimum variance among all the portfolios with expected return level $\mu$:
	\begin{align*}
		&\text{minimize}\; \;\;                   w^{\mathsf{T}}Cw\tag{$\A(\mu)$}\\
		&\text{subject to}\;\;                     m^{\mathsf{T}}w= \mu \\
		& \quad\quad\quad\quad\;\;\;      \1^{\mathsf{T}}w=1\\
		& \quad\quad\quad\quad\;\;\;      w\in\R^n.
	\end{align*}
	The problem $\A(\mu)$ has a unique optimal solution given by
	\begin{equation}\label{wfml}
	w(\mu)\coloneqq\frac{1}{\d}\of{(\g\mu-\b)C^{-1}m+(\a-\b\mu)C^{-1}\1}
	\end{equation}
	with corresponding expected return $\mu_{w(\mu)}=\mu$ and standard deviation
	\[
	\sigma_{w(\mu)}=\sqrt{\frac{1}{\g}+\frac{\g}{\d}\of{\mu-\frac{\b}{\g}}^2}.
	\]
	Hence,
	\[
	\cb{(\sigma_{w(\mu)},\mu)\mid \mu\in\R}=\H_+\coloneqq \H\cap (\R_+\times\R),
	\]
	where $\H$ is a hyperbola defined by
	\[
	\H\coloneqq \cb{(\sigma,\mu)\in\R^2\mid \sigma^2-\frac{\g}{\d}\of{\mu-\frac{\b}{\g}}^2=\frac{1}{\gamma}},
	\]
	whose asymptotes are specified by the equations
	\[
	\mu=\frac{\b}{\g}\pm \sqrt{\frac{\d}{\g}}\sigma.
	\]
	In particular, for every point $(\sigma,\mu)$ on the right wing $\H_+$, there exists a unique portfolio $w\in\W$ such that $(\sigma,\mu)=(\sigma_{w},\mu_w)$.	Let $\co \H_+$ be the convex hull of $\H_+$, that is,
	\[
	\co\H_+= \cb{(\sigma,\mu)\in\R_+\times\R\mid \sigma^2-\frac{\g}{\d}\of{\mu-\frac{\b}{\g}}^2\geq \frac{1}{\gamma}}.
	\]
	For every $(\sigma,\mu)\in \co\H_+$, there exists a portfolio $w\in \W$ such that $(\sigma,\mu)=(\sigma_{w},\mu_w)$. In particular,
	\begin{equation}\label{chull}
	\cb{(\sigma_w,\mu_w)\mid w\in\W}=\co\H_+.
	\end{equation}
\end{lemma}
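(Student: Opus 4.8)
The plan is to solve $(\A(\mu))$ explicitly as a strictly convex equality-constrained quadratic program and then read off every geometric consequence from the closed form. First I would observe that the feasible set is a nonempty affine subspace: it is nonempty because the linear independence of $m$ and $\1$ makes the $2\times n$ constraint matrix have full row rank, and $w\mapsto w^{\mathsf{T}}Cw$ is strictly convex since $C$ is positive definite. Hence any stationary point of the Lagrangian $L(w,\lambda_1,\lambda_2)=w^{\mathsf{T}}Cw-\lambda_1(m^{\mathsf{T}}w-\mu)-\lambda_2(\1^{\mathsf{T}}w-1)$ is automatically the unique global minimizer. The stationarity condition $2Cw=\lambda_1 m+\lambda_2\1$ gives $w=\tfrac12 C^{-1}(\lambda_1 m+\lambda_2\1)$; inserting this into the two constraints yields a $2\times2$ linear system in $(\lambda_1,\lambda_2)$ whose coefficient matrix has rows $(\a,\b)$ and $(\b,\g)$ and whose right-hand side is $(2\mu,2)^{\mathsf{T}}$, with determinant exactly $\d=\a\g-\b^2>0$. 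Solving and back-substituting produces the closed form $w(\mu)$ of \eqref{wfml}.

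Next I would compute the minimal variance $\sigma_{w(\mu)}^2=w(\mu)^{\mathsf{T}}Cw(\mu)$ without expanding $w(\mu)$ fully: using $Cw(\mu)=\tfrac12(\lambda_1 m+\lambda_2\1)$ together with $m^{\mathsf{T}}w(\mu)=\mu$ and $\1^{\mathsf{T}}w(\mu)=1$ collapses the quadratic form to $\tfrac12(\lambda_1\mu+\lambda_2)=\tfrac1\d(\g\mu^2-2\b\mu+\a)$. A short rearrangement using $\d=\a\g-\b^2$ rewrites this as $\tfrac1\g+\tfrac\g\d(\mu-\tfrac\b\g)^2$, which is the square of the stated $\sigma_{w(\mu)}$; moving the $\tfrac\g\d(\mu-\tfrac\b\g)^2$ term across gives the defining equation of $\H$, and discarding the constant $\tfrac1\g$ (the large-$\sigma$ limit) yields the asymptotes $\mu=\tfrac\b\g\pm\sqrt{\tfrac\d\g}\,\sigma$. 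Since $\sigma_{w(\mu)}>0$ for every $\mu$, these points all lie on the right wing $\H_+$, and $\mu\mapsto(\sigma_{w(\mu)},\mu)$ is a bijection onto $\H_+$ because its second coordinate already recovers $\mu$. Uniqueness of the portfolio behind each point of $\H_+$ is then automatic: any $w\in\W$ with $(\sigma_w,\mu_w)=(\sigma,\mu)$ is feasible for $(\A(\mu))$ and attains its minimum variance $\sigma^2$, so it must coincide with the unique minimizer $w(\mu)$.

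For the convex-hull description I would introduce $f(\mu)\coloneqq\sqrt{\tfrac1\g+\tfrac\g\d(\mu-\tfrac\b\g)^2}$ and record that it is convex, being the Euclidean norm of the affine map $\mu\mapsto(\tfrac{1}{\sqrt\g},\sqrt{\tfrac\g\d}(\mu-\tfrac\b\g))$. Since $\H_+$ is the graph $\cb{(f(\mu),\mu)\mid\mu\in\R}$, its convex hull is precisely $\cb{(\sigma,\mu)\in\R_+\times\R\mid\sigma\geq f(\mu)}$, which is the description of $\co\H_+$ in the statement. One inclusion holds because this set is convex and contains $\H_+$; the reverse holds because any $(\sigma_0,\mu_0)$ with $\sigma_0>f(\mu_0)$ is a convex combination of the two points of $\H_+$ sharing its first coordinate $\sigma_0$ (these exist and straddle $\mu_0$ since $f$ is even about $\tfrac\b\g$ and coercive).

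Finally I would establish \eqref{chull} by two inclusions. The inclusion $\cb{(\sigma_w,\mu_w)\mid w\in\W}\subseteq\co\H_+$ is immediate: each $w\in\W$ is feasible for $(\A(\mu_w))$, so $\sigma_w^2\geq\sigma_{w(\mu_w)}^2=f(\mu_w)^2$. For the reverse inclusion, given $(\sigma,\mu)\in\co\H_+$ I would perturb the minimizer $w(\mu)$ along a direction $d$ with $m^{\mathsf{T}}d=\1^{\mathsf{T}}d=0$; because $Cw(\mu)\in\mathrm{span}\cb{m,\1}$, the cross term $d^{\mathsf{T}}Cw(\mu)$ vanishes, whence $\sigma_{w(\mu)+td}^2=\sigma_{w(\mu)}^2+t^2 d^{\mathsf{T}}Cd$ sweeps out all of $[f(\mu)^2,+\infty)$ as $t$ varies, while $m^{\mathsf{T}}(w(\mu)+td)=\mu$ and $\1^{\mathsf{T}}(w(\mu)+td)=1$ are both preserved; choosing $t$ to hit $\sigma^2$ delivers the required portfolio. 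I expect this last step to be the main obstacle, since it needs a nonzero $d$ orthogonal to both $m$ and $\1$, i.e. a nontrivial orthogonal complement of $\mathrm{span}\cb{m,\1}$, which is available precisely when $n\geq3$; I would therefore flag that the surjectivity onto $\co\H_+$ in \eqref{chull} genuinely relies on $n\geq3$, because for $n=2$ the feasible set of $(\A(\mu))$ is a single point and the attainable set collapses to the curve $\H_+$ itself.
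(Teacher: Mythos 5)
Your proposal is correct, but it is worth noting that the paper does not actually prove this lemma: it simply cites Merton's original derivation and a textbook treatment, so you are supplying a proof where the authors supply a reference. Your derivation is the standard one and every step checks out. The Lagrangian system $2Cw=\lambda_1 m+\lambda_2\1$ with coefficient matrix $\bigl(\begin{smallmatrix}\alpha&\beta\\ \beta&\gamma\end{smallmatrix}\bigr)$ and determinant $\delta>0$ yields exactly \eqref{wfml}; the collapse of the minimal variance to $\tfrac12(\lambda_1\mu+\lambda_2)=\tfrac1\delta(\gamma\mu^2-2\beta\mu+\alpha)=\tfrac1\gamma+\tfrac\gamma\delta\of{\mu-\tfrac\beta\gamma}^2$ is a clean way to get the hyperbola without expanding $w(\mu)$; uniqueness of the portfolio behind each point of $\H_+$ via optimality in $(\A(\mu))$ is exactly right; and the epigraph/chord argument identifies $\co\H_+$ correctly (no closure issue arises, since every interior point lies on a chord between two points of $\H_+$ with the same $\sigma$-coordinate). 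The perturbation argument for surjectivity, using a direction $d$ with $m^{\mathsf{T}}d=\1^{\mathsf{T}}d=0$ so that the cross term $d^{\mathsf{T}}Cw(\mu)$ vanishes, is the standard and correct way to show that every variance level above the minimum is attained at each fixed mean.

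Your flag about $n\geq 3$ is a genuine and correct observation, not a defect of your proof. The paper only assumes $n\geq 2$, and for $n=2$ the two constraints $m^{\mathsf{T}}w=\mu$, $\1^{\mathsf{T}}w=1$ pin down $w$ uniquely, so $\cb{(\sigma_w,\mu_w)\mid w\in\W}$ collapses to $\H_+$ and the equality \eqref{chull} fails as stated. Everything up to and including the bijection between $\H_+$ and $\W$-portfolios survives for $n=2$; only the surjectivity onto the full convex hull requires a nontrivial orthogonal complement of $\mathrm{span}\cb{m,\1}$, hence $n\geq 3$. Since the later analysis of $(\M(r))$ ultimately locates optimal solutions on the upper branch of $\H_+$ itself, the downstream results are unaffected, but the lemma as written does implicitly need $n\geq 3$ for its final claim, and your proof is the one that makes this visible.
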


\begin{proof}
	These are well-known results from the analysis of the classical Markowitz problem. The reader may refer to the original derivation in \cite{merton} as well as many textbooks covering portfolio optimization, for instance, \citet[Chapter 3]{Capinski}.
	\end{proof}

Note that the point $(\frac{1}{\sqrt{\g}},\frac{\b}{\g})$ is the corner point of the right wing $\H_+$; in particular, for every $(\sigma,\mu)\in\co\H_+$, it holds $\sigma\geq \frac{1}{\sqrt{\g}}$. For each $\sigma\geq \frac{1}{\sqrt{\g}}$, let
\[
\mu(\sigma)\coloneqq \frac{\b}{\g}+\sqrt{\frac{\d}{\g}\sigma^2-\frac{\d}{\g^2}}.
\]
In particular, for every $(\sigma,\mu)\in\co\H_+$, it holds $\mu\leq \mu(\sigma)$.

\subsection{The associated two-dimensional problems}

The relation \eqref{chull} in Lemma~\ref{mw} motivates us to introduce a related problem expressed as
\begin{align*}
&\text{minimize}\; \;\;                  \rho_1\sigma-\mu\tag{$\M(r)$}\\
&\text{subject to}\;\;                     \rho_2\sigma-\mu\leq r \\
& \quad\quad\quad\quad\;\;\;      (\sigma,\mu)\in\co\H_+.
\end{align*}
Indeed, for every feasible solution $w\in\R^n$ of $(\P(r))$, the point $(\sigma_w,\mu_w)$ is a feasible solution of $(\M(r))$ and the corresponding objective function values are equal. Moreover, by the last part of Lemma~\ref{mw}, for every feasible solution $(\sigma,\mu)\in\R^2$ of $(\M(r))$, there exists a feasible solution $w\in\R^n$ of $(\P(r))$ such that $(\sigma,\mu)=(\sigma_w,\mu_w)$ and the corresponding objective function values are equal. It follows that for an optimal solution $w\in\R^n$ of $(\P(r))$, supposing that it exists, the induced feasible solution $(\sigma_w,\mu_w)$ of $(\M(r))$ is also optimal for $(\M(r))$. On the other hand, since $\rho_1\sigma-\mu\geq \rho_1\sigma-\mu(\sigma)$ and $r\geq \rho_2\sigma-\mu\geq \rho_2\sigma-\mu(\sigma)$ for every feasible solution $(\sigma,\mu)$ of $(\M(r))$, an optimal solution of $(\M(r))$, whenever it exists, must be on the upper half of $\H_+$, that is, it must be of the form $(\sigma,\mu(\sigma))$ for some $\sigma\geq \frac{1}{\sqrt{\g}}$. By the uniqueness part of Lemma~\ref{mw}, such an optimal solution corresponds to a unique portfolio given by the formula in \eqref{wfml}. Consequently, to figure out the optimal value and the possible optimal solutions of $(\P(r))$, it suffices to carry out the same analysis for $(\M(r))$ and then to recover an optimal solution of $(\P(r))$ using \eqref{wfml} whenever there is an optimal solution of $(\M(r))$.

Before providing a joint analysis of $(\M(r))$ and $(\P(r))$, we start by solving an ``unconstrained" problem, namely, the problem of minimizing the objective function of $(\M(r))$ over the whole set $\co \H_+$, without the additional risk constraint.

\begin{proposition}\label{aux}
	Consider the auxiliary problem
	\begin{align*}
	&\text{minimize}\; \;\;  \rho_1\sigma-\mu\tag{$\M_\A$}\\
	&\text{subject to}\;                      (\sigma,\mu)\in\co\H_+.
	\end{align*}
	\begin{enumerate}[(i)]
		\item Suppose that $\rho_1< \sqrt{\frac{\delta}{\g}}$. Then, $(\M_\A)$ is an unbounded problem with optimal value $-\infty$.
		\item Suppose that $\rho_1= \sqrt{\frac{\delta}{\g}}$. Then, $(\M_\A)$ has a finite infimum that is equal to $-\frac{\b}{\g}$ but the infimum is not attained by a feasible point.
		\item Suppose that $\rho_1> \sqrt{\frac{\delta}{\g}}$. Then, the unique optimal solution of $(\M_\A)$ is $(\sigma^\ast,\mu^\ast)$, where
		\begin{equation}\label{global}
		\sigma^\ast\coloneqq\frac{\rho_1}{\sqrt{\g\rho_1^2-\d}}, \quad \mu^\ast\coloneqq \frac{\b}{\g}+\frac{\d}{\g\sqrt{\g\rho_1^2-\d}}.
		\end{equation}
		Moreover, the unique portfolio $w^\ast$ with $(\sigma_{w^\ast}=\sigma^\ast,\mu_{w^\ast}=\mu^\ast)$ is given by
		\[
		w^\ast = w(\mu^\ast)=\frac{1}{\sqrt{\g\rho_1^2-\d}}C^{-1}m+\of{\frac{1}{\g}-\frac{\b}{\g\sqrt{\g\rho_1^2-\d}}}C^{-1}\1.
		\]
	\end{enumerate}
\end{proposition}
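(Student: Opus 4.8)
The plan is to collapse the planar problem $(\M_\A)$ into a one-dimensional minimization along the standard-deviation axis and then to read off the trichotomy from the asymptotic slope of the boundary curve $\mu(\sigma)$.

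First I would use the sign structure of the objective to restrict attention to the upper boundary of $\co\H_+$. For each fixed admissible $\sigma$, the constraint defining $\co\H_+$ confines $\mu$ to the bounded interval $\abs{\mu-\frac{\b}{\g}}\leq\sqrt{\frac{\d}{\g}(\sigma^2-\frac{1}{\g})}$, and since $\mu$ enters $\rho_1\sigma-\mu$ with a negative coefficient, the objective is minimized at the largest feasible $\mu$, namely $\mu=\mu(\sigma)$; recall also that $\co\H_+$ is empty below $\sigma=\frac{1}{\sqrt\g}$. Hence $(\M_\A)$ is equivalent to minimizing the scalar function
\[
f(\sigma)\coloneqq\rho_1\sigma-\mu(\sigma)=\rho_1\sigma-\frac{\b}{\g}-\sqrt{\frac{\d}{\g}\sigma^2-\frac{\d}{\g^2}},\qquad \sigma\geq\frac{1}{\sqrt\g},
\]
over the half-line $[\frac{1}{\sqrt\g},+\infty)$, with the convention that any minimizer $\sigma$ yields the candidate point $(\sigma,\mu(\sigma))$.

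Next I would analyze $f$ through its derivative
\[
f'(\sigma)=\rho_1-\sqrt{\frac{\d}{\g}}\,\frac{\sigma}{\sqrt{\sigma^2-\frac{1}{\g}}}.
\]
The map $\sigma\mapsto\frac{\sigma}{\sqrt{\sigma^2-1/\g}}$ is strictly decreasing from $+\infty$ at $\sigma=\frac{1}{\sqrt\g}$ to $1$ as $\sigma\to+\infty$, so $f'$ is strictly increasing from $-\infty$ up to the limiting value $\rho_1-\sqrt{\d/\g}$; this single monotone crossing (or its absence) produces the three regimes. If $\rho_1<\sqrt{\d/\g}$, then the leading term of $f$ is $(\rho_1-\sqrt{\d/\g})\sigma$, so $f(\sigma)\to-\infty$, giving case (i). If $\rho_1=\sqrt{\d/\g}$, then $f'<0$ on all of $(\frac{1}{\sqrt\g},+\infty)$, so $f$ is strictly decreasing and its infimum is the limit at $+\infty$; evaluating the $\infty-\infty$ form $\sqrt{\d/\g}\,\sigma-\sqrt{(\d/\g)\sigma^2-\d/\g^2}\to0$ shows $\inf f=-\frac{\b}{\g}$, not attained, which is case (ii). If $\rho_1>\sqrt{\d/\g}$, then $f'$ has a unique zero at which $f$ attains its global minimum; solving $f'(\sigma)=0$ gives $(\sigma^\ast)^2=\frac{\rho_1^2}{\g\rho_1^2-\d}$, i.e. the stated $\sigma^\ast$, and substituting back into $\mu(\sigma)$ yields $\mu^\ast$ after simplifying $\frac{\d}{\g}(\sigma^\ast)^2-\frac{\d}{\g^2}=\frac{\d^2}{\g^2(\g\rho_1^2-\d)}$.

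Finally, to record case (iii) in portfolio terms I would invoke Lemma~\ref{mw}: one checks that $\sigma^\ast\geq\frac{1}{\sqrt\g}$ reduces to $\d\geq0$, so $(\sigma^\ast,\mu^\ast)$ lies on $\H_+$ and is realized by the unique portfolio $w^\ast=w(\mu^\ast)$; plugging $\mu^\ast$ into \eqref{wfml} and simplifying the coefficients $\g\mu^\ast-\b=\frac{\d}{\sqrt{\g\rho_1^2-\d}}$ and $\a-\b\mu^\ast=\frac{\d}{\g}\of{1-\frac{\b}{\sqrt{\g\rho_1^2-\d}}}$ produces the claimed formula for $w^\ast$. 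I expect the main obstacle to be the borderline case (ii): one must argue that, although the objective is bounded below, the infimum is genuinely not attained, which hinges on the careful evaluation of the cancelling limit at $\sigma\to+\infty$ rather than on any interior stationarity. The remainder is routine algebra once the monotonicity of $f'$ and the comparison of $\rho_1$ with the asymptotic slope $\sqrt{\d/\g}$ are in place.
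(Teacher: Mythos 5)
Your proposal is correct and follows essentially the same route as the paper: restrict to the upper branch $\mu=\mu(\sigma)$ of the hyperbola and analyze the one-variable function $\rho_1\sigma-\mu(\sigma)$ via the derivative $\rho_1-\sqrt{\tfrac{\d}{\g}}\,\sigma/\sqrt{\sigma^2-\tfrac{1}{\g}}$, comparing $\rho_1$ with the asymptotic slope $\sqrt{\d/\g}$. The only cosmetic differences are that the paper phrases the stationarity in case (iii) as a normal-cone first-order condition for the convex set $\co\H_+$ and argues non-attainment in case (ii) via the non-intersection of $\H$ with its asymptote, whereas you use strict monotonicity of $f$ and $f'$; both yield the same equation and the same conclusions.
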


\begin{proof}
	\text{}
	\begin{enumerate}[(i)]
		\item Suppose that $\rho_1<\sqrt{\frac{\d}{\g}}$. A standard exercise in calculus yields that
		\[
		\lim_{\frac{1}{\sqrt{\g}}\leq\sigma\uparrow+\infty}\of{\rho_1\sigma-\mu(\sigma)}=\lim_{\sigma_+(r)\leq\sigma\uparrow+\infty}\of{\rho_1\sigma-\frac{\b}{\g}-\sqrt{\frac{\d}{\g}\sigma^2-\frac{\d}{\g^2}}}=-\infty.
		\]
		Since the objective function diverges to $-\infty$ on a subset of $\co\H_+$, it follows that $(\M_\A)$ is an unbounded problem with optimal value $-\infty$.
		
		\item Suppose that $\rho_1=\sqrt{\frac{\d}{\g}}$. In this case, the limit evaluated in the previous case yields
		\[
		\lim_{\frac{1}{\sqrt{\g}}\leq\sigma\uparrow+\infty}\of{\rho_1\sigma-\mu(\sigma)}=-\frac{\b}{\g}.
		\]
		On the other hand, since the hyperbola $\H$ and its asymptote $\cb{(\sigma,\mu)\in\R^2\mid \sqrt{\frac{\d}{\g}}\sigma-\mu = -\frac{\b}{\g}}$ do not intersect, there is no feasible solution $(\bar{\sigma},\bar{\mu})$ of $(\M_\A)$ such that
		\[
		\rho_1\bar{\sigma}-\bar{\mu}=\sqrt{\frac{\d}{\g}}\bar{\sigma}-\bar{\mu} =-\frac{\b}{\g}.
		\]
		Hence, the infimum of $(\M_\A)$ is equal to $-\frac{\b}{\g}$ but it is not attained by a feasible solution.
		
		\item Suppose that $\rho_1>\sqrt{\frac{\d}{\g}}$. Since every point $(
		\sigma,\mu)\in\co\H_+$ has $\rho_1\sigma-\mu\geq \rho_1\sigma-\mu(\sigma)$, if $(\sigma,\mu)$ is an optimal solution of $(\M_\A)$, then it must satisfy $\mu=\mu(\sigma)$. Moreover, since $\co\H_+$ is a convex set, by the well-known first-order condition, a point $(\sigma,\mu(\sigma))$ is an optimal solution of $(\M_\A)$ if and only if the negative of the gradient of the objective function at $(\sigma,\mu)$, which is $(-\rho_1,1)$ in this case, is a normal direction of the feasible region $\co\H_+$ at $(\sigma,\mu)$, that is,
		\[
		(-\rho_1,1)\in \cb{(x,y)\in\R^2\mid\frac{d\mu(\sigma)}{d\sigma}y+x=0, x\leq 0},
		\]
		where the derivative is calculated as
		\[
		\frac{d\mu(\sigma)}{d\sigma}=\sqrt{\frac{\d}{\g}}
		\frac{\sigma}{\sqrt{\sigma^2-\frac{1}{\g}}}.
		\]
		Hence, $(\sigma,\mu(\sigma))$ is an optimal solution of $(\M_\A)$ if and only if
		\[
		\sqrt{\frac{\d}{\g}}
		\frac{\sigma}{\sqrt{\sigma^2-\frac{1}{\g}}}=\rho_1,
		\]
		that is,
		\[
		\sigma=\sigma^\ast=\frac{\rho_1}{\sqrt{\g\rho_1^2-\g}}.
		\]
		Consequently, we also have $\mu(\sigma)=\mu^\ast$. Hence, $(\sigma^\ast,\mu^\ast)$ is the unique optimal solution of $(\M_\A)$. The corresponding portfolio $w^\ast=w(\mu^\ast)$ can be calculated easily using \eqref{wfml}.
	\end{enumerate}
\end{proof}

\subsection{Main theorems}\label{mainthm}

In this section, we present complete solutions for $(\M(r))$ and $(\P(r))$. To that end, we provide three main theorems based on the slope of the line
\[
\L(r) \coloneqq \cb{(\sigma,\mu)\in\R^2\mid \rho_2\sigma-\mu =r}
\]
related to the risk constraint. It turns out that the comparison between the slope $\rho_2$ of $\L(r)$ and the (positive) slope $\sqrt{\frac{\d}{\g}}$ of the asymptote of $\H$ is critical for the analysis.

\begin{theorem}\label{thm1}
	Let $r\in\R$ and suppose that $\rho_2<\sqrt{\frac{\delta}{\gamma}}$. Then, the hyperbola $\H$ and the line $\L(r)$ intersect at two distinct points $(\sigma_-(r),\mu_-(r))$ and $(\sigma_+(r),\mu_+(r))$ defined by
	\begin{align}
	&\sigma_{\pm}(r)\coloneqq \frac{-(\g r + \b)\rho_2\pm\sqrt{\d(\g r^2+2\b r+ \a - \rho_2^2)}}{\d - \g \rho_2^2},\label{sigmapm}\\
	&\mu_{\pm}(r)\coloneqq \frac{-\d r - \b\rho_2^2\pm\rho_2\sqrt{\d(\g r^2+2\b r+ \a - \rho_2^2)}}{\d - \g \rho_2^2}.\label{mupm}
	\end{align}
	In particular, $\sigma_+(r)>0>\sigma_-(r)$.
	Moreover, one of the following cases holds for $(\M(r))$. 
	\begin{enumerate}[(i)]
		\item Suppose that $\rho_1<\sqrt{\frac{\delta}{\gamma}}$. Then, $(\M(r))$ and $(\P(r))$ are unbounded problems with common optimal value $-\infty$.
		\item Suppose that $\rho_1=\sqrt{\frac{\delta}{\gamma}}$. Then, $(\M(r))$ and $(\P(r))$ have a common finite infimum that is equal to $-\frac{\b}{\g}$ but the infimum is not attained by a feasible solution in both problems.
		\item Suppose that $\rho_1>\sqrt{\frac{\delta}{\gamma}}$. Let 
		\begin{equation}\label{rcond}
		r^\ast \coloneqq\rho_2\sigma^\ast-\mu^\ast= \frac{\rho_1\rho_2\g-\d}{\g\sqrt{\g\rho_1^2-\d}}-\frac{\b}{\g}, \quad r_0\coloneqq \rho_2\sigma_0-
		\mu_0=\frac{\rho_2}{\sqrt{\g}}-\frac{\b}{\g}.
		\end{equation}
		It holds $r^\ast\leq r_0$. Moreover, the unique optimal solution $(\sigma^\ast,\mu^\ast)$ of $(\M_\A)$ is also the unique optimal solution of $(\M(r))$ and the corresponding portfolio $w^\ast$ is the unique optimal solution of $(\P(r))$ if and only if $r\geq r^\ast$. In particular, this is the case when $r\geq r_0$. If $r<r^\ast$, then $(\sigma_+(r),\mu_+(r))$ is the unique optimal solution of $(\M(r))$ and
		\begin{align*}
		w_+(r)\coloneqq w(\mu_+(r))&=\frac{1}{\d-\g\rho_2^2}\Bigg [\of{-\g r-\b+\frac{\g\rho_2}{\d}\sqrt{\d(\g\rho_2^2+2\b r+\a-\rho_2^2)}}C^{-1}m\\
		&\quad\quad\quad\quad\quad +\of{\b r+\a-\rho_2^2-\frac{\b\rho_2}{\d}\sqrt{\d(\g\rho_2^2+2\b r+\a-\rho_2^2)}}C^{-1}\1\Bigg ]
		\end{align*}
		is the unique optimal solution of $(\P(r))$.
		\end{enumerate}
\end{theorem}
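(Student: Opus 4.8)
The plan is to reduce everything to the two-dimensional problem $(\M(r))$ via the correspondence recorded before Proposition~\ref{aux}: every feasible $w$ of $(\P(r))$ yields a feasible $(\sigma_w,\mu_w)$ of $(\M(r))$ with equal objective value, and conversely. Consequently optimal values, attainment, and uniqueness all transfer between the two problems, and optimal points lying on the upper branch $\H_+$ are converted into portfolios through the formula \eqref{wfml} of Lemma~\ref{mw}. So it suffices to analyze $(\M(r))$ and then translate.

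First I would establish the intersection formulas. Substituting $\mu=\rho_2\sigma-r$ into the hyperbola equation $\sigma^2-\frac{\g}{\d}(\mu-\frac{\b}{\g})^2=\frac1\g$ and clearing denominators gives the quadratic
\[
(\d-\g\rho_2^2)\sigma^2+2\g\rho_2\of{r+\tfrac{\b}{\g}}\sigma-\g\of{r+\tfrac{\b}{\g}}^2-\tfrac{\d}{\g}=0 .
\]
The hypothesis $\rho_2<\sqrt{\d/\g}$ makes the leading coefficient $\d-\g\rho_2^2$ strictly positive, while the constant term is manifestly negative; hence the product of the roots is negative and they satisfy $\sigma_+>0>\sigma_-$. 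A short computation identifies the discriminant with $4\d(\g r^2+2\b r+\a-\rho_2^2)$, and since $\g r^2+2\b r+\a$ attains its minimum $\d/\g>\rho_2^2$ over $r\in\R$, the discriminant is strictly positive, so there are two distinct real roots; the quadratic formula then produces \eqref{sigmapm}, and \eqref{mupm} follows from $\mu_\pm=\rho_2\sigma_\pm-r$.

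For the three cases I would compare the objective of $(\M(r))$ with that of the unconstrained problem $(\M_\A)$ solved in Proposition~\ref{aux}. The observation used throughout is that along the upper branch $\mu(\sigma)$ one has $\rho_2\sigma-\mu(\sigma)\to-\infty$ as $\sigma\to+\infty$ (because $\rho_2<\sqrt{\d/\g}$); hence the risk constraint $\rho_2\sigma-\mu\leq r$ holds for all large $\sigma$, so the whole descent direction exploited in Proposition~\ref{aux}(i)--(ii) lies in the feasible region of $(\M(r))$. In case (i) this immediately gives optimal value $-\infty$; in case (ii) it shows the infimum $-\frac{\b}{\g}$ is approached within the feasible set, while non-attainment is inherited from $(\M_\A)$ since the feasible set of $(\M(r))$ is a subset of $\co\H_+$. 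For case (iii), $(\sigma^\ast,\mu^\ast)$ from \eqref{global} is the unique minimizer over all of $\co\H_+$; it is feasible for $(\M(r))$ exactly when $\rho_2\sigma^\ast-\mu^\ast\leq r$, i.e. when $r\geq r^\ast$, in which case it remains the unique optimizer of $(\M(r))$, since restricting a uniquely attained minimum to a subset that still contains the minimizer preserves both optimality and uniqueness. The inequality $r^\ast\leq r_0$ follows because $(\sigma^\ast,\mu^\ast)$ and the vertex $(\frac1{\sqrt\g},\frac{\b}{\g})$ both lie on the upper branch, whose slope exceeds $\sqrt{\d/\g}>\rho_2$ everywhere; by the mean value theorem $\mu^\ast-\frac{\b}{\g}>\rho_2(\sigma^\ast-\frac1{\sqrt\g})$, which rearranges to $r^\ast<r_0$.

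The remaining and most delicate part is case (iii) with $r<r^\ast$, where $(\sigma^\ast,\mu^\ast)$ is infeasible. Here I would first argue that any optimizer lies on the constraint line $\rho_2\sigma-\mu=r$: an optimizer with $\rho_2\sigma-\mu<r$ would be a local, hence (by convexity of $\co\H_+$ and linearity of the objective) global, minimizer of the objective over $\co\H_+$, forcing it to equal $(\sigma^\ast,\mu^\ast)$, which contradicts infeasibility. Intersecting the line with $\co\H_+$, the sign analysis of the quadratic above shows the trace is exactly the ray $\cb{(\sigma,\rho_2\sigma-r)\mid \sigma\geq\sigma_+(r)}$; on it the objective equals $(\rho_1-\rho_2)\sigma+r$, which is strictly increasing since $\rho_1>\sqrt{\d/\g}>\rho_2$, and is therefore uniquely minimized at $\sigma=\sigma_+(r)$. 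Thus $(\sigma_+(r),\mu_+(r))$ is the unique optimizer of $(\M(r))$. Finally, since $(\sigma^\ast,\mu^\ast)$ and $(\sigma_+(r),\mu_+(r))$ lie on $\H_+$, Lemma~\ref{mw} converts them into the unique portfolios $w^\ast=w(\mu^\ast)$ and $w_+(r)=w(\mu_+(r))$ via \eqref{wfml}; substituting $\mu^\ast$ from \eqref{global} and $\mu_+(r)$ from \eqref{mupm} and simplifying yields the stated closed forms, and the transfer principle makes these optimal for $(\P(r))$. I expect the main obstacles to be the careful identification of the feasible trace as the ray $\sigma\geq\sigma_+(r)$ and the algebraic bookkeeping in the final substitution into \eqref{wfml}.
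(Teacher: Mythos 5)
Your overall strategy coincides with the paper's: reduce to the two-dimensional problem $(\M(r))$ via Lemma~\ref{mw}, compute $\H\cap\L(r)$ from the quadratic in $\sigma$, and compare against the unconstrained problem $(\M_\A)$. Two of your sub-arguments are genuinely different from the paper's and both are sound. For $\sigma_+(r)>0>\sigma_-(r)$ you use the sign of the product of the roots of the (positively-led) quadratic, which is cleaner than the paper's inequality $\d(\g r^2+2\b r+\a-\rho_2^2)>(\g r+\b)^2\rho_2^2$. For $r^\ast\leq r_0$ you apply the mean value theorem along the upper branch, whose slope everywhere exceeds $\sqrt{\d/\g}>\rho_2$, in place of the paper's case-split algebraic verification of $\rho_1\rho_2\g-\d\leq\rho_2\sqrt{\g}\sqrt{\g\rho_1^2-\d}$; your argument even yields the strict inequality $r^\ast<r_0$. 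These geometric shortcuts are a genuine simplification. Cases (i), (ii), and the subcase $r\geq r^\ast$ of (iii) are handled essentially as in the paper.

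The one genuine gap is in case (iii) with $r<r^\ast$. You prove (a) \emph{any optimizer} of $(\M(r))$ must lie on $\L(r)$ (local-implies-global for a linear objective on the convex set $\co\H_+$, plus infeasibility of $(\sigma^\ast,\mu^\ast)$), and (b) the objective restricted to the trace $\L(r)\cap\co\H_+=\cb{(\sigma,\rho_2\sigma-r)\mid\sigma\geq\sigma_+(r)}$ is uniquely minimized at $\sigma_+(r)$. Together these only say: \emph{if} the infimum of $(\M(r))$ is attained, the minimizer is $(\sigma_+(r),\mu_+(r))$. You never establish attainment, so the concluding ``Thus'' is not justified as written. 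The repair is short: for $\rho_1>\sqrt{\d/\g}$, every $(\sigma,\mu)\in\co\H_+$ satisfies $\abs{\mu-\frac{\b}{\g}}\leq\sqrt{\d/\g}\,\sigma$, hence $\rho_1\sigma-\mu\geq\of{\rho_1-\sqrt{\d/\g}}\sigma-\frac{\b}{\g}$, so sublevel sets of the objective intersected with the closed feasible set are compact and the minimum is attained by Weierstrass. (Alternatively, argue as the paper does: show $\mu_+(r)=\mu(\sigma_+(r))$, show every feasible $(\bar\sigma,\bar\mu)\neq(\sigma_+(r),\mu_+(r))$ has $\bar\sigma>\sigma_+(r)>\sigma^\ast$, and invoke the strict monotonicity of $\sigma\mapsto\rho_1\sigma-\mu(\sigma)$ on $(\sigma^\ast,+\infty)$; this compares arbitrary feasible points directly and sidesteps existence.) With that repair, and the routine substitution of $\mu^\ast$ and $\mu_+(r)$ into \eqref{wfml}, your proof is complete.
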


\begin{proof}
	By the definitions of $\H$ and $\L(r)$, a point $(\sigma,\mu)\in\H\cap \L(r)$ must satisfy
	\[
	\sigma^2-\frac{\g}{\d}\of{\mu-\frac{\b}{\g}}^2=\sigma^2-\frac{\g}{\d}\of{\rho_2\sigma-r-\frac{\b}{\g}}^2=\frac{1}{\g},
	\]
	that is,
	\begin{equation}\label{quadforint}
	\of{1-\frac{\g}{\d}\rho_2^2}\sigma^2+2\frac{\g}{\d}\of{r+\frac{\b}{\g}}\rho_2\sigma-\frac{\g}{\d}\of{r+\frac{\b}{\g}}^2-\frac{1}{\g}=0.
	\end{equation}
	Note that \eqref{quadforint} is a quadratic equation in $\sigma$ whose discriminant is given by
	\begin{align}
	\Delta(r)\coloneqq  & 4\frac{\g^2}{\d^2}\of{r+\frac{\b}{\g}}^2\rho_2^2+4\of{1-\frac{\g}{\d}\rho_2^2}\sqb{ \frac{\g}{\d}\of{r+\frac{\b}{\g}}^2+\frac{1}{\g}}\notag\\
	=& 4\frac{\g}{\d}\of{r+\frac{\b}{\g}}^2+4\frac{1}{\g}-4\frac{1}{\d}\rho_2^2\label{quadratic2}\\
    =& \frac{4}{\d}\of{\g r^2 +2\b r + \frac{\b^2+\d}{\g} - \rho_2^2}\notag \\
	=& \frac{4}{\d}\of{\g r^2 +2\b r + \a - \rho_2^2}.\label{quadratic}
	\end{align}
	Using \eqref{quadratic}, one can easily check that $r\mapsto \Delta(r)$ is a strictly convex quadratic function on $\R$ whose minimum value is given by
	\begin{equation}\label{mindisc}
	\min_{r\in\R}\Delta(r)=\frac{4}{\d}\of{\frac{\d}{\g}-\rho_2^2}.
	\end{equation}
	Since $\rho_2<\sqrt{\frac{\d}{\g}}$ by assumption, we see that $\Delta(r)>0$ for every $r\in\R$ so that the quadratic equation \eqref{quadforint} has two distinct real solutions $\sigma_-(r),\sigma_+(r)$ given by \eqref{sigmapm}. Moreover, by \eqref{quadratic2} and the assumption that $\rho_2<\sqrt{\frac{\d}{\g}}$, we have
	\begin{equation}\label{rootsign}
	\d\of{\g r^2 +2\b r + \a - \rho_2^2}=\frac{\d^2}{4}\Delta(r)> \g\d\of{r+\frac{\b}{\g}}^2\geq \g^2\of{r+\frac{\b}{\g}}^2\rho_2^2=(\g r+\b)^2\rho_2^2,
	\end{equation}
	which implies that $\sigma_-(r)<0$ and $\sigma_+(r)>0$.
	The corresponding expected return values $\mu_-(r),\mu_+(r)$ given by \eqref{mupm} are calculated from the defining equation of $\L(r)$ so that
	\[
	\H\cap\L(r)=\cb{(\sigma_-(r),\mu_-(r)),(\sigma_+(r),\mu_+(r))}.
	\]
	
	Next, we consider the three possible cases for $(\M(r))$. As a preparation, we first claim that every $(\bar{\sigma},\bar{\mu})\in\H_+$ with $\bar{\sigma}\geq\sigma_+(r)$ is also a feasible solution of $(\M(r))$. In other words, we claim that the set
	\[
	\mathcal{S}\coloneqq \cb{(\bar{\sigma},\bar{\mu})\in\R_+\times\R\mid \bar{\mu}=\mu(\bar{\sigma}),\bar{\sigma}\geq \sigma_+(r)}
	\]
	is a subset of the feasible region of $(\M(r))$, that is, $\rho_2\bar{\sigma}-\bar{\mu}\leq r$ for every $(\bar{\sigma},\bar{\mu})\in\mathcal{S}$. Indeed, since $\rho_2\leq \sqrt{\frac{\d}{\g}}$, we have
	\begin{equation}\label{deriv2}
	\frac{d}{d\sigma}\of{\rho_2\sigma-\mu(\sigma)}
    =\rho_2-\frac{\sqrt{\frac{\d}{\g}}\sigma}{\sqrt{\sigma^2-\frac{1}{\g}}}
	\leq \sqrt{\frac{\d}{\g}}\of{1-\frac{\sigma}{\sqrt{\sigma^2-\frac{1}{\g}}}}< 0
	\end{equation}
	for every $\sigma>\frac{1}{\sqrt{\g}}$. Since we also have $\mu(\sigma_+(r))\geq \mu_+(r)$, it follows that every $(\bar{\sigma},\bar{\mu})\in\mathcal{S}$ satisfies
	\[
	r= \rho_2\sigma_+(r)-\mu_+(r)\geq  \rho_2\sigma_+(r)-\mu(\sigma_+(r))>\rho_2\bar{\sigma}-\mu(\bar{\sigma})=\rho_2\bar{\sigma}-\bar{\mu}
	\]
	so that it is feasible for $(\M(r))$. Hence, the claim follows.
	\begin{enumerate}[(i)]
		\item Suppose that $\rho_1<\sqrt{\frac{\d}{\g}}$. A standard exercise in calculus yields that
		\[
		\lim_{\sigma_+(r)\leq\sigma\uparrow+\infty}\of{\rho_1\sigma-\mu(\sigma)}=\lim_{\sigma_+(r)\leq\sigma\uparrow+\infty}\of{\rho_1\sigma-\frac{\b}{\g}-\sqrt{\frac{\d}{\g}\sigma^2-\frac{\d}{\g^2}}}=-\infty.
		\]
		Since the objective function diverges to $-\infty$ on $\mathcal{S}$, it follows that $(\M(r))$ is an unbounded problem with optimal value $-\infty$.
		
		\item Suppose that $\rho_1=\sqrt{\frac{\d}{\g}}$. In this case, the limit evaluated in the previous case yields
		\[
		\lim_{\sigma_+(r)\leq\sigma\uparrow+\infty}\of{\rho_1\sigma-\mu(\sigma)}=-\frac{\b}{\g}.
		\]
		On the other hand, since the hyperbola $\H$ and its asymptote $\cb{(\sigma,\mu)\in\R^2\mid \sqrt{\frac{\d}{\g}}\sigma-\mu = -\frac{\b}{\g}}$ do not intersect, there is no feasible solution $(\sigma,\mu)$ of $(\M(r))$ such that
		\[
		\rho_1\sigma-\mu=\sqrt{\frac{\d}{\g}}\sigma-\mu =-\frac{\b}{\g}.
		\]
		Hence, the infimum in $(\M(r))$ is equal to $-\frac{\b}{\g}$ but it is not attained by a feasible solution.
		
		\item Suppose that $\rho_1>\sqrt{\frac{\d}{\g}}$. Note that the feasible region of $(\M(r))$ is a subset of that of $(\M_\A)$. Hence, in view of Proposition~\ref{aux}, the unique optimal solution $(\sigma^\ast,\mu^\ast)$ of $(\M_\A)$ is also the unique optimal solution of $(\M(r))$ if and only if it is feasible for $(\M(r))$, that is,
		\[
		r^\ast=\rho_2\sigma^\ast-\mu^\ast \leq r,
		\]
		where $r^\ast$ is defined by \eqref{rcond}.
		
		Next, we show that $r^\ast\leq r_0$, where $r_0$ is defined by \eqref{rcond}. So we show that
		\[
		\frac{\rho_1\rho_2\g-\d}{\g\sqrt{\g\rho_1^2-\d}}-\frac{\b}{\g}\leq \frac{\rho_2}{\sqrt{\g}}-\frac{\b}{\g},
		\]
		which is equivalent to
		\begin{equation}\label{somesuff}
		\rho_1\rho_2\g-\d\leq \rho_2\sqrt{\g}\sqrt{\g\rho_1^2-\d}.
		\end{equation}
		If $\g\rho_1\rho_2-\d\leq 0$, then \eqref{somesuff} holds trivially. Suppose that $\g\rho_1\rho_2-\d> 0$. In this case, \eqref{somesuff} is equivalent to
		\[
		\g^2\rho_1^2\rho_2^2+\d^2-2\g\d\rho_1\rho_2=(\g\rho_1\rho_2-\d)^2\leq \g\rho_2^2\of{\g\rho_1^2-\d}=\g^2\rho_1^2\rho_2^2-\g\d\rho_2^2,
		\]
		which is equivalent to
		\[
		\d-2\g\rho_1\rho_2+\g\rho_2^2\leq 0.
		\]
		But the last inequality follows from the supposition and the assumption that $\rho_1>\sqrt{\frac{\d}{\g}}>\rho_2$ since
		\[
		\d-2\g\rho_1\rho_2+\g\rho_2^2\leq \g\rho_1\rho_2-2\g\rho_1\rho_2+\g\rho_2^2=\g\rho_2(\rho_2-\rho_1)\leq 0.
		\]
		Consequently, \eqref{somesuff} holds when $\g\rho_1\rho_2-\d>0$ as well. Hence, $r^\ast\leq r_0$.
		
		Finally, we consider the case $r<r^\ast$, that is, $(\sigma^\ast,\mu^\ast)$ is not feasible for $(\M(r))$. In this case, we prove that $(\sigma_+(r),\mu_+(r))$ is the unique optimal solution of $(\M(r))$. To that end, note that we have $r<r_0$ in this case so that
		\[
		\rho_2\sigma_+(r)-\mu_+(r)=r<r_0=\rho_2\sigma_0-\mu_0\leq \rho_2\sigma_+(r)-\mu_0.
		\]
		This implies $\mu_+(r)>\mu_0$. In particular, $\mu_+(r)=\mu(\sigma_+(r))$. Next, let $(\bar{\sigma},\bar{\mu})$ be a feasible solution of $(\M(r))$ with $(\bar{\sigma},\bar{\mu})\neq(\sigma_+(r),\mu_+(r))$. We first claim that $\bar{\sigma}>\sigma_+(r)$. To get a contradiction, suppose $\bar{\sigma}\leq \sigma_+(r)$. By \eqref{deriv2}, $\sigma\mapsto \rho_2\sigma-\mu(\sigma)$ is a decreasing function. Using this and the fact that $\bar{\mu}\leq\mu(\bar{\sigma})$, we obtain
	    \[
	    r\geq \rho_2\bar{\sigma}-\bar{\mu}\geq \rho_2\bar{\sigma}-\mu(\bar{\sigma})\geq \rho_2\sigma_+(r)-\mu_+(r)=r,
	    \]
	    which yields $\rho_2\bar{\sigma}-\bar{\mu}=r$ and $\bar{\mu}=\mu(\bar{\sigma})$. This implies $(\bar{\sigma},\bar{\mu})\in\H\cap\L(r)$ and hence $(\bar{\sigma},\bar{\mu})=(\sigma_+(r),\mu_+(r))$, which is a contradiction. Hence, the claim follows. On the other hand, using the assumption $\rho_1>\sqrt{\frac{\d}{\g}}$, we notice that
	    \begin{equation}\label{strictlyinc}
	    \frac{d}{d\sigma}\of{\rho_1\sigma-\mu(\sigma)}
	    =\rho_1-\frac{\sqrt{\frac{\d}{\g}}\sigma}{\sqrt{\sigma^2-\frac{1}{\g}}}>0 \quad \Leftrightarrow\quad \sigma>\frac{\rho_1}{\sqrt{\g\rho_1^2-\d}}=\sigma^\ast,
	    \end{equation}
	    that is $\sigma\mapsto \rho_1\sigma-\mu(\sigma)$ is a strictly increasing function for $\sigma>\sigma^\ast$. Moreover, we have $\bar{\sigma}>\sigma_+(r)>\sigma^\ast$. Indeed, the first inequality is by the previous claim. The second inequality holds as otherwise, $(\sigma^\ast,\mu^\ast)$ would be feasible for $(\M(r))$ by the preparatory claim preceding the analysis of the three cases, which is excluded by the assumption $r<r^\ast$. Since we also have $\bar{\mu}\leq\mu(\bar{\sigma})$ and $\mu_+(r)=\mu(\sigma_+(r))$, it follows that
	    \[
	    \rho_1\bar{\sigma}-\bar{\mu}\geq \rho_1\bar{\sigma}-\mu(\bar{\sigma})>\rho_1\sigma_+(r)-\mu(\sigma_+(r))=\rho_1\sigma_+(r)-\mu_+(r),
	    \]
	    that is, $(\bar{\sigma},\bar{\mu})$ is not optimal for $(\M(r))$. Hence, $(\sigma_+(r),\mu_+(r))$ is the unique optimal solution of $(\M(r))$.
	\end{enumerate}
	\end{proof}

\begin{theorem}\label{thm2}
	Let $r\in\R$ and suppose that $\rho_2>\sqrt{\frac{\delta}{\gamma}}$. Then, the hyperbola $\H$ and the line $\L(r)$ intersect precisely at two points, $(\sigma_-(r),\mu_-(r))$ and $(\sigma_+(r),\mu_+(r))$ defined by \eqref{sigmapm}, if and only if $r\leq r_{-}$ or $r\geq r_+$, where
	\begin{equation}\label{rpm}
	r_{\pm}\coloneqq \frac{-\b\pm\sqrt{\g\rho_2^2-\d}}{\g}.
	\end{equation}
	In particular, it holds $\sigma_+(r)\leq\sigma_-(r)<0$ if $r\leq r_-$, it holds $0<\sigma_+(r)\leq\sigma_-(r)$ if $r\geq r_+$. Moreover, the points $(\sigma_-(r),\mu_-(r))$ and $(\sigma_+(r),\mu_+(r))$ are identical if and only if $r=r_-$ or $r=r_+$. The hyperbola $\H$ and the line $\L(r)$ do not intersect at all if and only if $r_-<r<r_+$. Consequently, $(\M(r))$ is feasible if and only if $r\geq r_+$.
	
	Suppose that $r\geq r_+$. Then, one of the following cases holds for $(\M(r))$.
	\begin{enumerate}[(i)]
		\item Suppose that $\rho_1\leq \sqrt{\frac{\delta}{\gamma}}$. Then, $(\sigma_-(r),\mu_-(r))$ is the unique optimal solution of $(\M(r))$ and
		\begin{align*}
		w_-(r)\coloneqq w(\mu_-(r))&=\frac{1}{\g\rho_2^2-\d}\Bigg [\of{\g r+\b+\frac{\g\rho_2}{\d}\sqrt{\d(\g\rho_2^2+2\b r+\a-\rho_2^2)}}C^{-1}m\\
		&\quad\quad\quad\quad\quad +\of{-\b r-\a+\rho_2^2-\frac{\b\rho_2}{\d}\sqrt{\d(\g\rho_2^2+2\b r+\a-\rho_2^2)}}C^{-1}\1\Bigg ]
		\end{align*}
		is the unique optimal solution of $(\P(r))$.
		\item Suppose that $\rho_1>\sqrt{\frac{\delta}{\gamma}}$. Then, the unique optimal solution $(\sigma^\ast,\mu^\ast)$ of $(\M_\A)$ is also the unique optimal solution of $(\M(r))$ and the corresponding portfolio $w^\ast$ is the unique optimal solution of $(\P(r))$ if and only if $r\geq r^\ast$, where $r^\ast$ is defined by \eqref{rcond}.
		
		It holds $r_+=r^\ast$ if $\rho_1=\rho_2$ and $r_+<r^\ast$ if $\rho_1\neq\rho_2$. Suppose that $\rho_1\neq \rho_2$ and $r_+\leq r<r^\ast$. Then, one of the following cases holds for $(\M(r))$.
		\begin{enumerate}[a.]
			\item If $\rho_1<\rho_2$, then $(\sigma_-(r),\mu_-(r))$ is the unique optimal solution of $(\M(r))$ and $w_{-}(r)$ is the unique optimal solution of $(\P(r))$.
			\item If $\rho_1>\rho_2$, then $(\sigma_+(r),\mu_+(r))$ is the unique optimal solution of $(\M(r))$ and $w_{+}(r)$ is the unique optimal solution of $(\P(r))$.
		\end{enumerate}
	\end{enumerate}
\end{theorem}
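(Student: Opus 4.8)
The plan is to mirror the structure of the proof of Theorem~\ref{thm1}, reducing the two-dimensional problem $(\M(r))$ to a one-dimensional problem along the upper branch of $\H_+$. As already observed before Proposition~\ref{aux}, any optimal solution of $(\M(r))$ must be of the form $(\sigma,\mu(\sigma))$ with $\sigma\geq\frac{1}{\sqrt{\g}}$, so it suffices to analyze the scalar functions
\[
g(\sigma)\coloneqq\rho_1\sigma-\mu(\sigma),\qquad h(\sigma)\coloneqq\rho_2\sigma-\mu(\sigma),\qquad \sigma\geq\tfrac{1}{\sqrt{\g}}.
\]
Their derivatives were essentially computed in \eqref{deriv2} and \eqref{strictlyinc}: one has $\frac{d}{d\sigma}\of{\rho\sigma-\mu(\sigma)}=\rho-\sqrt{\frac{\d}{\g}}\,\sigma/\sqrt{\sigma^2-\frac{1}{\g}}$, and the fraction decreases from $+\infty$ to $\sqrt{\frac{\d}{\g}}$ as $\sigma$ runs over $(\frac{1}{\sqrt{\g}},+\infty)$. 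Under the standing assumption $\rho_2>\sqrt{\frac{\d}{\g}}$ this makes $h$ strictly ``valley-shaped'': it attains a unique global minimum at $\hat\sigma\coloneqq\rho_2/\sqrt{\g\rho_2^2-\d}$, and substituting into the formula for $r^\ast$ in \eqref{rcond} with $\rho_1$ replaced by $\rho_2$ shows $h(\hat\sigma)=r_+$. This identity is the key linking the feasibility threshold $r_+$ to the geometry.

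For the intersection and feasibility claims I would reuse the discriminant $\Delta(r)$ from \eqref{quadratic}. Since $\g>0$, the quadratic $r\mapsto\g r^2+2\b r+\a-\rho_2^2$ has exactly the two real roots $r_-,r_+$ of \eqref{rpm} (using $\a\g-\b^2=\d$), and by \eqref{mindisc} its minimum is strictly negative. Hence $\Delta(r)>0$ iff $r<r_-$ or $r>r_+$, $\Delta(r)=0$ iff $r\in\cb{r_-,r_+}$, and $\Delta(r)<0$ iff $r_-<r<r_+$, which yields the stated count of intersection points together with the tangency/no-intersection dichotomy. The signs of $\sigma_\pm(r)$ follow from Vieta's formulas applied to \eqref{quadforint}: the leading coefficient $\frac{\d-\g\rho_2^2}{\d}$ is negative, so the product of the roots is positive (the roots share a sign) while their sum has the sign of $\g r+\b$, which is positive for $r\geq r_+$ and negative for $r\leq r_-$. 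Finally, since $\rho_2\sigma-\mu$ is minimized over $\co\H_+$ on the upper branch and $\min_\sigma h(\sigma)=h(\hat\sigma)=r_+$, the problem $(\M(r))$ is feasible if and only if $r\geq r_+$.

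For the optimality statements (now assuming $r\geq r_+$) I would split on $\rho_1$. In case (i), $\rho_1\leq\sqrt{\frac{\d}{\g}}$ forces $g'(\sigma)<0$ on the whole branch, so $g$ is strictly decreasing and its minimum is attained at the largest feasible $\sigma$; since $h$ is increasing on $[\hat\sigma,+\infty)$, the feasible set $\cb{h\leq r}$ has right endpoint $\sigma_-(r)$, which lies on the upper branch, and strict monotonicity gives $(\sigma_-(r),\mu_-(r))$ as the unique optimizer. In case (ii), $\rho_1>\sqrt{\frac{\d}{\g}}$ makes $g$ valley-shaped with unique unconstrained minimizer $\sigma^\ast$ (Proposition~\ref{aux}(iii)); thus $(\sigma^\ast,\mu^\ast)$ solves $(\M(r))$ exactly when it is feasible, i.e.\ when $h(\sigma^\ast)=r^\ast\leq r$. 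When $r<r^\ast$ the optimizer is the feasible-interval endpoint nearest $\sigma^\ast$. To decide which endpoint, I would use that $\rho\mapsto\rho/\sqrt{\g\rho^2-\d}$ is strictly decreasing, so $\sigma^\ast>\hat\sigma$ iff $\rho_1<\rho_2$ and $\sigma^\ast<\hat\sigma$ iff $\rho_1>\rho_2$; combined with $\sigma_+(r)\leq\hat\sigma\leq\sigma_-(r)$ and $r<r^\ast$, this places $\sigma^\ast$ strictly to the right of the whole feasible interval when $\rho_1<\rho_2$ (optimizer $\sigma_-(r)$, case a) and strictly to the left when $\rho_1>\rho_2$ (optimizer $\sigma_+(r)$, case b).

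Recovering the portfolios $w_\pm(r)=w(\mu_\pm(r))$ and $w^\ast$ is then immediate from the uniqueness part of Lemma~\ref{mw} via the closed form \eqref{wfml}, the displayed expressions being routine substitutions. I expect the main obstacle to be the two threshold comparisons: showing $r_+=r^\ast$ iff $\rho_1=\rho_2$ (and $r_+<r^\ast$ otherwise), and, for case b, that $r^\ast\leq r_0$ so that the left endpoint of the feasible interval really is $\sigma_+(r)$ rather than the corner $\frac{1}{\sqrt{\g}}$. Both reduce, after clearing the square roots (legitimate because $\g\rho_1\rho_2>\d$ since both risk levels exceed $\sqrt{\frac{\d}{\g}}$), to the sign of $\g\d(\rho_1-\rho_2)^2$, and I would handle them exactly as the inequality \eqref{somesuff} was handled in the proof of Theorem~\ref{thm1}.
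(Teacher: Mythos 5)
Your proposal is correct and reaches the same conclusions, but the engine driving the hardest steps is genuinely different from the paper's. The paper proves the decisive inequalities $\sigma_-(r)<\sigma^\ast$ (case a) and $\sigma_+(r)>\sigma^\ast$ (case b) by clearing square roots and analyzing the discriminants and zeros of several auxiliary quadratics in $r$ (see \eqref{toshow1}--\eqref{quadr1}, \eqref{plusthreshold}--\eqref{quad2}, \eqref{bigzero}--\eqref{quad31}); you instead exploit that $h(\sigma)=\rho_2\sigma-\mu(\sigma)$ is valley-shaped with minimizer $\hat\sigma=\rho_2/\sqrt{\g\rho_2^2-\d}$ and minimum value $h(\hat\sigma)=r_+$, so that $h(\sigma^\ast)=r^\ast>r$ together with the monotonicity of $h$ on each side of $\hat\sigma$ and of $\rho\mapsto\rho/\sqrt{\g\rho^2-\d}$ immediately places $\sigma^\ast$ outside the feasible interval on the correct side. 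This is shorter, explains \emph{why} the sign of $\rho_1-\rho_2$ decides which endpoint wins, and also subsumes the paper's separate verification that $\sigma_-(r)\geq \rho_2/\sqrt{\g\rho_2^2-\d}$ in case (i); your Vieta's-formula argument for the signs of $\sigma_\pm(r)$ is likewise a clean substitute for \eqref{rootsign}. Two points would need to be filled in to make this airtight. First, you repeatedly identify the endpoints of the feasible interval $\cb{\sigma\geq\frac{1}{\sqrt{\g}}\mid h(\sigma)\leq r}$ with $\sigma_+(r)$ and $\sigma_-(r)$; this requires knowing that the relevant intersection points lie on the \emph{upper} half of the right wing, i.e.\ $\mu(\sigma_\mp(r))=\mu_\mp(r)$, which the paper verifies explicitly (the preparation before case (i), and \eqref{muplus}). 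In your framework this follows by noting that each on-branch solution of $h(\sigma)=r$ is an intersection point and that the upper-half point, when only one exists, is the one with the larger $\mu$, namely $(\sigma_-(r),\mu_-(r))$ since $\mu_-(r)\geq\mu_+(r)$; but you should say so, especially in case b where you also need $r^\ast\leq r_0$ to rule out the corner $\frac{1}{\sqrt{\g}}$ as the left endpoint. Second, your closing remark that both threshold comparisons reduce to the sign of $\g\d(\rho_1-\rho_2)^2$ is only literally true for $r_+$ versus $r^\ast$; the comparison $r^\ast\leq r_0$ reduces after squaring to $\d-2\g\rho_1\rho_2+\g\rho_2^2\leq 0$, which holds under $\rho_1>\rho_2>\sqrt{\d/\g}$ but is a different computation. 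Both are routine and do not affect the soundness of the overall argument.
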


\begin{proof}
	As in the proof of Theorem~\ref{thm1}, a point $(\sigma,\mu)\in\H\cap \L(r)$ satisfies \eqref{quadforint}, which is a quadratic equation in $\sigma$ with discriminant $\Delta(r)$ given by \eqref{quadratic}. However, since $\rho_2>\sqrt{\frac{\delta}{\gamma}}$, the minimum in \eqref{mindisc} is strictly negative: $\min_{r\in\R}\Delta(r)<0$. Moreover, we have $\Delta(r)=0$ if and only if $r\in\cb{r_-,r_+}$, where $r_{\pm}$ are defined by \eqref{rpm}; $\Delta(r)<0$ if and only if $r_-<r<r_+$; $\Delta(r)>0$ if and only if $r<r_-$ or $r>r_+$. Hence, $\H\cap\L(r)$ is nonempty if and only if $r\leq r_-$ or $r\geq r_+$, and the intersection consists of $(\sigma_-(r),\mu_-(r)), (\sigma_+(r),\mu_+(r))$ in this case. Mimicing the arguments for \eqref{rootsign}, this time with $\rho_2>\sqrt{\frac{\d}{\g}}$, gives
	\[
	\d(\g r^2+2\b r+\a - \rho_2^2)<(\g r+\b)^2 \rho_2^2.
	\]
	It follows that $\sigma_+(r)\leq\sigma_-(r)<0$ if $r\leq r_-\leq -\frac{\b}{\g}$ and $0<\sigma_+(r)\leq\sigma_-(r)$ if $r\geq r_+\geq -\frac{\b}{\g}$.
	The rest of the claims in the first paragraph of the theorem follows immediately.
	
	For the rest of the proof, suppose that $r\geq r_+$. We consider the three possible cases for $(\M(r))$ next. As a preparation, we first show that $\mu(\sigma_-(r))=\mu_-(r)$. To that end, it suffices to show that
	\[
	\frac{\d r + \b\rho_2^2+\rho_2\sqrt{\d(\g r^2+2\b r+ \a - \rho_2^2)}}{ \g \rho_2^2-\d}=\mu_-(r)\geq \mu_0=\frac{\b}{\g},
	\]
	which is equivalent to
	\[
	\d\g r+\g\b\rho_2^2+\g\rho_2\sqrt{\d(\g r^2+2\b r+\a-\rho_2^2)}\geq \g\b\rho_2^2-\d\b,
	\]
	that is,
	\begin{equation}\label{sthtoshow}
    \d(\g r+\b)+\g\rho_2\sqrt{\d(\g r^2+2\b r+\a-\rho_2^2)}\geq 0.
	\end{equation}
	On the other hand, since $r\geq r_+$, we have $\g r+\b\geq \sqrt{\g\rho_2^2-\d}>0$ from which \eqref{sthtoshow} follows. Hence, $\mu(\sigma_-(r))=\mu_-(r)$.
	
	\begin{enumerate}[(i)]
		\item Suppose that $\rho_1\leq\sqrt{\frac{\d}{\g}}$. Let $(\bar{\sigma},\bar{\mu})$ be a feasible solution of $(\M(r))$ with $(\bar{\sigma},\bar{\mu})\neq (\sigma_-(r),\mu_-(r))$. We claim that $\bar{\sigma}\leq \sigma_-(r)$. To get a contradiction, suppose that $\bar{\sigma}>\sigma_-(r)$. Similar to \eqref{strictlyinc}, we notice that $\sigma\mapsto\rho_2\sigma-\mu(\sigma)$ is a strictly increasing function for $\sigma>\frac{\rho_2}{\sqrt{\g\rho_2^2-\d}}$ thanks to the assumption $\rho_2>\sqrt{\frac{\d}{\g}}$. On the other hand, since we assume that $r\geq r_+$, we have
		\[
		\sigma_-(r)=\frac{(\g r+\b)\rho_2+\sqrt{\d(\g r^2+2\b r+\a-\rho_2^2)}}{\g\rho_2^2-\d}\geq \frac{(\g r_+ +\b)\rho_2}{\g\rho_2^2-\d}=\frac{\rho_2}{\sqrt{\g\rho_2^2-\d}}.
		\]
		Since we also have $\bar{\mu}\leq \mu(\bar{\sigma})$ and $\mu(\sigma_-(r))=\mu_-(r)$, it follows that
		\[
		r\geq \rho_2\bar{\sigma}-\bar{\mu}\geq \rho_2\bar{\sigma}-\mu(\bar{\sigma})> \rho_2\sigma_-(r)-\mu(\sigma_-(r))=\rho_2\sigma_-(r)-\mu_-(r)=r,
		\]
		which is a contradiction. Hence, $\bar{\sigma}\leq \sigma_-(r)$. Moreover, we further have $\bar{\sigma}<\sigma_-(r)$ as otherwise $\bar{\sigma}=\sigma_-(r)$ would imply
		\[
		r\geq \rho_2\bar{\sigma}-\bar{\mu}=\rho_2\sigma_-(r)-\bar{\mu}\geq\rho_2\sigma_-(r)-\mu(\bar{\sigma})=\rho_2\sigma_-(r)-\mu_-(r)=r
		\]
		so that $(\bar{\sigma},\bar{\mu})=(\sigma_-(r),\mu_-(r))$, which is a contradiction. On the other hand, similar to \eqref{deriv2}, we can argue that $\sigma\mapsto\rho_1\sigma-\mu(\sigma)$ is a strictly decreasing function for $\sigma>\frac{1}{\sqrt{\g}}$ thanks to the assumption $\rho_1\leq\sqrt{\frac{\d}{\g}}$. Consequently, $\bar{\sigma}<\sigma_-(r)$ implies
		\[
		\rho_1\bar{\sigma}-\bar{\mu}\geq \rho_1\bar{\sigma}-\mu(\bar{\sigma})>\rho_1\sigma_-(r)-\mu(\sigma_-(r))=\rho_1\sigma_-(r)-\mu_-(r),
		\]
		that is, $(\bar{\sigma},\bar{\mu})$ is not optimal for $(\M(r))$. Hence, $(\sigma_-(r),\mu_-(r))$ is the unique optimal solution of $(\M(r))$.
		
		\item Suppose that $\rho_1>\sqrt{\frac{\d}{\g}}$. In this case, as in the proof of (iii) of Theorem~\ref{thm3}, we note that the unique optimal solution $(\sigma^\ast,
		\mu^\ast)$ of $(\M_\A)$ is also the unique optimal solution of $(\M(r))$ if and only if $r\geq r^\ast$, where $r^\ast$ is defined by \eqref{rcond}. From the definitions, it is clear that $r_+=r^\ast$ if $\rho_1=\rho_2$. Suppose that $\rho_1\neq\rho_2$. We first claim that
		\begin{equation}\label{critical}
		r_+<r^\ast.
			\end{equation}
			Indeed, supposing otherwise would yield $\sqrt{(\g\rho_1^2-\d)(\g\rho_2^2-\d)}\geq \rho_1\rho_2\g-\d$, which is equivalent to $(\g\rho_1^2-\d)(\g\rho_2^2-\d)\geq (\rho_1\rho_2\g-\d)^2$ as we have $\rho_1\rho_2>\frac{\d}{\g}$ by the assumptions on $\rho_1, \rho_2$. Further simplification would yield the inequality $0\geq \g\d(\rho_1-\rho_2)^2$, which is a contradiction since $\rho_1\neq\rho_2$ by supposition. Hence, the claim follows.
			
			In view of \eqref{critical}, it remains to figure out the optimal solution of $(\M(r))$ under the condition that
			\begin{equation}\label{rem}
			r_+\leq r<r^\ast,
			\end{equation}
			which we assume for the rest of the proof.
			
			\begin{enumerate}[a.]
			\item Let us assume that $\rho_1<\rho_2$. We prove that $(\sigma_-(r),\mu_-(r))$ is the unique optimal solution of $(\M(r))$. To that end, let $(\bar{\sigma},\bar{\mu})$ be a feasible solution of $(\M(r))$ such that $(\bar{\sigma},\bar{\mu})\neq (\sigma_-(r),\mu_-(r))$. Following the same arguments as in the proof of (i), one can check that $\bar{\sigma}<\sigma_-(r)$.
			
			Next, we show $\sigma_-(r)<\sigma^\ast$. Note that $\sigma_-(r)<\sigma^\ast$ is equivalent to
			\[
			\frac{(\g r+\b)\rho_2+\sqrt{\d(\g r^2+2\b r+\a-\rho_2^2)}}{\g\rho_2^2-\d}<\frac{\rho_1}{\sqrt{\g\rho_1^2-\d}},
			\]
			that is,
			\begin{equation}\label{toshow1}
			\sqrt{\d(\g r^2+2\b r+\a-\rho_2^2)}<	(\g r+\b)\rho_2-\frac{\rho_1(\g\rho_2^2-\d)}{\sqrt{\g\rho_1^2-\d}}.
			\end{equation}
			However, \eqref{toshow1} does not hold true when its right hand side is negative, that is, when
			\begin{equation*}
			r\geq r_{++}\coloneqq \frac{\rho_1(\g\rho_2^2-\d)}{\g\rho_2\sqrt{\g\rho_1^2-\d}}-\frac{\b}{\g}.
			\end{equation*}
			On the other hand, note that
			\begin{align}\label{rcomp}
			r^\ast<r_{++}&\quad\Leftrightarrow\quad\frac{\rho_1(\g\rho_2^2-\d)}{\g\rho_2\sqrt{\g\rho_1^2-\d}}-\frac{\b}{\g}>\frac{\rho_1\rho_2\g-\d}{\g\sqrt{\g\rho_1^2-\d}}-\frac{\b}{\g}\notag\\
			&\quad\Leftrightarrow\quad \g\rho_1\rho_2^2-\rho_1\d>\g\rho_1\rho_2^2-\d\rho_2\notag\\
			&\quad\Leftrightarrow\quad \rho_1<\rho_2.
			\end{align}
			Hence, in view of \eqref{rem}, we always have $r<r_{++}$ so that \eqref{toshow1} can be rewritten as
			\begin{equation}\label{sqrd}
			\d(\g r^2+2\b r+\a-\rho_2^2)<(\g r+\b)^2\rho_2^2+\frac{\rho_1^2(\g\rho_2^2-\d)^2}{(\g\rho_1^2-\d)}-2(\g r+\b)\frac{\rho_1\rho_2(\g\rho_2^2-\d)}{\sqrt{\g\rho_1^2-\d}},
			\end{equation}
			which is equivalent to
			\begin{equation}\label{quadr1}
			0<\g r^2+2\of{\b-\frac{\g\rho_1\rho_2}{\sqrt{\g\rho_1^2-\d}}}r+\a+\frac{\rho_1^2(\g\rho_2^2-\d)}{\g\rho_1^2-\d}-\frac{2\b\rho_1\rho_2}{\sqrt{\g\rho_1^2-\d}}.
			\end{equation}
			The discriminant of the quadratic function of $r$ in the right hand side of \eqref{quadr1} is calculated as $\frac{4\d^2}{\g\rho_1^2-\d}>0$ so that this function has two distict real zeros and the smaller of these zeros is precisely $r^\ast$. Since we assume \eqref{rem}, \eqref{quadr1} always holds and we have $\sigma_-(r)<\sigma^\ast$.
			
			On the other hand, by \eqref{strictlyinc}, $\sigma\mapsto \rho_1\sigma-\mu(\sigma)$ is a strictly decreasing function for $\sigma<\sigma^\ast$. Hence, $\bar{\sigma}<\sigma_-(r)<\sigma^\ast$ implies
			\begin{equation}\label{proofofopt}
			\rho_1\bar{\sigma}-\bar{\mu}\geq	\rho_1\bar{\sigma}-\mu(\bar{\sigma})> \rho_1\sigma_-(r)-\mu(\sigma_-(r))=\rho_1\sigma_-(r)-\mu_-(r).
			\end{equation}
			We conclude that $(\bar{\sigma},\bar{\mu})$ is not optimal for $(\M(r))$. Hence, $(\sigma_-(r),\mu_-(r))$ is the unique optimal solution.
			
			\item Let us assume that $\rho_1>\rho_2$. We prove that $(\sigma_+(r),\mu_+(r))$ is the unique optimal solution of $(\M(r))$. To that end, let $(\bar{\sigma},\bar{\mu})$ be a feasible solution of $(\M(r))$ with $(\bar{\sigma},\bar{\mu})\neq (\sigma_+(r),\mu_+(r))$. We claim that $\bar{\sigma}\geq \sigma_+(r)$. To get a contradiction, suppose that $\bar{\sigma}<\sigma_+(r)$. 
			
			Similar to the proof of (i), we notice that $\sigma\mapsto\rho_2\sigma-\mu(\sigma)$ is a strictly decreasing function for $\sigma<\frac{\rho_2}{\sqrt{\g\rho_2^2-\d}}$. Next, we show that
			\begin{equation}\label{plusthreshold}
			\frac{(\g r+\b)\rho_2-\sqrt{\d(\g r^2+2\b r+\a-\rho_2^2)}}{\g\rho_2^2-\d}=\sigma_+(r)\leq\frac{\rho_2}{\sqrt{\g\rho_2^2-\d}},
			\end{equation}
			which is equivalent to
			\begin{equation}\label{befsq}
			(\g r+\b)\rho_2-\rho_2\sqrt{\g\rho_2^2-\d}\leq\sqrt{\d(\g r^2+2\b r+\a-\rho_2^2)}
			\end{equation}
			It is easy to check that the left hand side of \eqref{befsq} is positive thanks to the assumption $r\geq r_+$. Hence, \eqref{befsq} is equivalent to
			\[
			\rho_2^2\of{\g^2 r^2+2\b r+\b^2}+\rho_2^2(\g\rho_2^2-\d)-2\rho_2^2\sqrt{\g\rho_2^2-\d}(\g r+\b)\leq \d (\g r^2+2\b r+\a-\rho_2^2),
			\]
			that is,
			\begin{equation}\label{quad2}
			(\g\rho_2^2-\d)\g r^2+2\of{\b(\g\rho_2^2-\d)-\g\rho_2^2\sqrt{\g\rho_2^2-\d}}r+\rho_2^2(\b^2+\g\rho_2^2)-\a\d-2\b\rho_2^2\sqrt{\g\rho_2^2-\d}\leq 0.
			\end{equation}
			One can check that the discriminant of the quadratic function of $r$ on the left hand side of \eqref{quad2} is calculated as $4\d^2(\g\rho_2^2-\d)>0$ so that it has two distinct zeros which are given as
			\[
			-\frac{\b}{\g}+\frac{\g\rho_2^2\pm\d}{\g\sqrt{\g\rho_2^2-\d}}.
			\]
			Hence, \eqref{quad2} holds if and only if $r$ is between these two zeros. Note that the smaller zero is equal to $r_+$ and we have $r\geq r_+$ by assumption. Next, we show that
			\begin{equation}\label{bigzero}
			r^\ast=-\frac{\b}{\g}+\frac{\rho_1\rho_2\g-\d}{\g\sqrt{\g\rho_1^2-\d}}<-\frac{\b}{\g}+\frac{\g\rho_2^2+\d}{\g\sqrt{\g\rho_2^2-\d}},
			\end{equation}
			which is equivalent to
			\begin{equation}\label{bigzero2}
			\sqrt{\g\rho_1^2-\d}(\g\rho_2^2+\d)>(\rho_1\rho_2\g-\d)\sqrt{\g\rho_2^2-\d}.
			\end{equation}
			By the assumptions $\rho_1>\sqrt{\frac{\d}{\g}}$, $\rho_2>\sqrt{\frac{\d}{\g}}$, the right hand side of \eqref{bigzero2} is positive so that \eqref{bigzero2} is equivalent to
			\[
			(\g\rho_1^2-\d)(\d^2\rho_2^4+\d^2+2\g\d\rho_2^2)>(\g\rho_2^2-\d)(\g^2\rho_1^2\rho_2^2+\d^2-2\g\d\rho_1\rho_2),
			\]
			that is,
			\[
			\d\rho_1^2+3\g\rho_1^2\rho_2^2-\g\rho_2^4-3\d\rho_2^2+2\g\rho_1\rho_2^3-2\d\rho_1\rho_2=\d\rho_1^2-\g\rho_2^4+3\rho_2^2(\g\rho_1^2-\d)+2\rho_1\rho_2(\g\rho_2^2-\d)>0.
			\]
			However, since $\rho_1>\rho_2$, we have
			\begin{align*}
			\d\rho_1^2-\g\rho_2^4+3\rho_2^2(\g\rho_1^2-\d)+2\rho_1\rho_2(\g\rho_2^2-\d)&>\d\rho_2^2-\g\rho_2^4+3\rho_2^2(\g\rho_2^2-\d)+2\rho^2_2(\g\rho_2^2-\d)\\
			&=\rho_2^2(\d-\g\rho_2^2)+5\rho_2^2(\g\rho_2^2-\d)\\
			&=4\rho_2^2(\g\rho_2^2-\d)>0
			\end{align*}
			so that \eqref{bigzero} holds. Consequently, the assumption \eqref{rem} guarantees that \eqref{plusthreshold} holds.
			
			Next, we show that $\mu(\sigma_+(r))=\mu_+(r)$. To that end, it suffices to show that
			\begin{equation}\label{muplus}
			\frac{\d r + \b\rho_2^2-\rho_2\sqrt{\d(\g r^2+2\b r+ \a - \rho_2^2)}}{ \g \rho_2^2-\d}=\mu_+(r)\geq \mu_0=\frac{\b}{\g},
			\end{equation}
			which is equivalent to
			\[
			\d\g r+\g\b\rho_2^2-\g\rho_2\sqrt{\d(\g r^2+2\b r+\a-\rho_2^2)}\geq \g\b\rho_2^2-\d\b,
			\]
			that is,
			\begin{equation}\label{sthtoshow1}
			\sqrt{\d}(\g r+\b)\geq\g\rho_2\sqrt{\g r^2+2\b r+\a-\rho_2^2}.
			\end{equation}
			On the other hand, since $r\geq r_+$, we have $\g r+\b\geq \sqrt{\g\rho_2^2-\d}>0$ so that \eqref{sthtoshow1} is equivalent to
			\begin{align}
			0&\geq -\d(\g^2 r^2+\b^2+2\g\b r)+\g^2\rho_2^2(\g r^2+2\b r+\a-\rho_2^2)\label{quad31}\\
			&=(\g\rho_2^2-\d)\g(\g r^2+2\b r)-\d\b^2+\g^2\rho_2^2(\a-\rho_2^2).\label{quad3}
			\end{align}
			Note that the discriminant of the quadratic function of $r$ in \eqref{quad3} is $4\g^3\rho_2^2(\g\rho_2^2-\d)^2>0$ so that it has two distinct zeros given by
			\[
			-\frac{\b}{\g}\pm\frac{\rho_2}{\sqrt{\g}}.
			\]
			Hence, the inequality in \eqref{quad31} holds if and only if $r$ is between these two zeros. Clearly, we have $-\frac{\b}{\g}-\frac{\rho_2}{\sqrt{\g}}<-\frac{\b}{\g}\leq r_+\leq r$. On the other hand, note that the larger zero is equal to $r_0$ and it is easy to check that $r^\ast\leq r_0$ if and only if $\rho_1\geq \frac{\g\rho_2^2+\d}{2\g\rho_2}$ and we also have $\rho_2>\frac{\g\rho_2^2+\d}{2\g\rho_2}$ since $\rho_2>\sqrt{\d}{\g}$. Consequently, the assumption $\rho_1>\rho_2$ implies that $r^\ast\leq r_0$ holds so that $r$ is between the two zeros of the quadratic function in \eqref{quad3}. Hence, the inequality in \eqref{quad31} holds, \eqref{muplus} holds and we have $\mu(\sigma_+(r))=\mu_+(r)$.
			
			Hence, $\bar{\sigma}<\sigma_+(r)\leq\frac{\rho_2}{\sqrt{\g\rho_2^2-\d}}$ and $\mu(\sigma_+(r))=\mu_+(r)$ imply that
			\[
			r\geq \rho_2\bar{\sigma}-\bar{\mu}\geq \rho_2\bar{\sigma}-\mu(\bar{\sigma})> \rho_2\sigma_+(r)-\mu(\sigma_+(r))=\rho_2\sigma_+(r)-\mu_+(r)=r,
			\]
			which is a contradiction. Hence, $\bar{\sigma}\geq \sigma_+(r)$. Moreover, we further have $\bar{\sigma}>\sigma_+(r)$ as otherwise $\bar{\sigma}=\sigma_+(r)$ would imply
			\[
			r\geq \rho_2\bar{\sigma}-\bar{\mu}=\rho_2\sigma_+(r)-\bar{\mu}\geq\rho_2\sigma_+(r)-\mu(\bar{\sigma})=\rho_2\sigma_+(r)-\mu_+(r)=r
			\]
			so that $(\bar{\sigma},\bar{\mu})=(\sigma_+(r),\mu_+(r))$, which is a contradiction.
			
			In view of \eqref{rcomp}, we have $r^\ast>r_{++}$. We show that $\sigma_+(r)>\sigma^\ast$. Note that $\sigma_+(r)>\sigma^\ast$ is equivalent to
			\[
			\frac{(\g r+\b)\rho_2-\sqrt{\d(\g r^2+2\b r+\a-\rho_2^2)}}{\g\rho_2^2-\d}>\frac{\rho_1}{\sqrt{\g\rho_1^2-\d}},
			\]
			that is,
			\begin{equation}\label{toshow}
			\sqrt{\d(\g r^2+2\b r+\a-\rho_2^2)}<	(\g r+\b)\rho_2-\frac{\rho_1(\g\rho_2^2-\d)}{\sqrt{\g\rho_1^2-\d}}.
			\end{equation}
			Since $r^\ast>r_{++}$, the right hand side of \eqref{toshow} is strictly positive so that \eqref{toshow} is equivalent to \eqref{sqrd} as well as to \eqref{quadr1}. Repeating the same analysis of the quadratic function in \eqref{quadr1}, we see that this function has two distinct real zeros and the smaller of these zeros is precisely $r^\ast$. Since we assume \eqref{rem}, \eqref{quadr1} always holds and we have $\sigma_+(r)>\sigma^\ast$.
			
			Consequently, $\bar{\sigma}>\sigma_+(r)>\sigma^\ast$ and the fact that $\sigma\mapsto\rho_1\sigma-\mu(\sigma)$ is strictly increasing for $\sigma>\sigma^\ast$ imply
			\[
			\rho_1\bar{\sigma}-\bar{\mu}\geq \rho_1\bar{\sigma}-\mu(\bar{\sigma})>\rho_1\sigma_+(r)-\mu(\sigma_+(r))=\rho_1\sigma_+(r)-\mu_+(r),
			\]
			that is, $(\bar{\sigma},\bar{\mu})$ is not optimal for $(\M(r))$. Hence, $(\sigma_+(r),\mu_+(r))$ is the unique optimal solution of $(\M(r))$.
			\end{enumerate}
	\end{enumerate}
	\end{proof}

\begin{remark}
	Theorem~\ref{thm2} can be used to understand if solving a risk-risk problem is significantly different from solving a mean-risk problem in the following way. Suppose that the risk measure in the constraint satisfies $\rho_2>\sqrt{\frac{\d}{\g}}$ and we have $r\geq r_+$. When the risk measure in the objective is simply the negative expected value, we have $\rho_1=0$. In this case, by (i) of Theorem~\ref{thm2}, $w_-(r)$ is the unique optimal portfolio. However, the optimal portfolio does not change at all even if we use a nontrivial coherent risk measure in the objective such as value-at-risk or average value-at-risk as long as $\rho_1\leq \sqrt{\frac{\d}{\g}}$. Hence, the risk-risk problem with $\rho_1\leq \sqrt{\frac{\d}{\g}}$ is practically the same as the mean-risk problem. On the other hand, by (ii) of Theorem~\ref{thm2}, for $\rho_1>\sqrt{\frac{\d}{\g}}$ with $r\geq r^\ast$, the risk-risk problem becomes significantly different from the mean-risk problem since the unique optimal portfolio is $w^\ast$, which depends on the choices of $\rho_1$ and $\rho_2$.
	\end{remark}

\begin{theorem}\label{thm3}
	Let $r\in\R$ and suppose that $\rho_2=\sqrt{\frac{\delta}{\gamma}}$. Then, the hyperbola $\H$ and the line $\L(r)$ intersect precisely at the single point $(\hat{\sigma}(r),\hat{\mu}(r))$ defined by
	\begin{equation}\label{linsigma}
	\hat{\sigma}(r)\coloneqq\frac{\g r^2+2\b r+\a}{2\rho_2(\g r+\b)},\quad \hat{\mu}(r)\coloneqq\frac{\a-\g r^2}{2(\g r+\b)}
	\end{equation}
	if and only if $r\neq -\frac{\b}{\g}$. In particular, $\bar{\sigma}(r)<0$ if $r<-\frac{\b}{\g}$, it holds $\bar{\sigma}(r)>0$ if $r>-\frac{\b}{\g}$. The hyperbola $\H$ and the line $\L(r)$ do not intersect at all if and only if $r=-\frac{\b}{\g}$. Consequently, $(\M(r))$ is feasible if and only if $r>-\frac{\b}{\g}$. 
	
	Suppose that $r>-\frac{\b}{\g}$. Then, one of the following cases holds for $(\M(r))$.
	\begin{enumerate}[(i)]
		\item Suppose that $\rho_1<\sqrt{\frac{\delta}{\gamma}}$. Then, $(\M(r))$ and $(\P(r))$ are unbounded problems with common optimal value $-\infty$.
		\item Suppose that $\rho_1=\sqrt{\frac{\delta}{\gamma}}$. Then, $(\M(r))$ and $(\P(r))$ have a common finite infimum that is equal to $-\frac{\b}{\g}$ but the infimum is not attained by a feasible solution in both problems.
		\item Suppose that $\rho_1>\sqrt{\frac{\delta}{\gamma}}$. Then, the unique optimal solution $(\sigma^\ast,\mu^\ast)$ of $(\M_\A)$ is also the unique optimal solution of $(\M(r))$ and the corresponding portfolio $w^\ast$ is the unique optimal solution of $(\P(r))$ if and only if $r\geq r^\ast$, where $r^\ast$ is defined by \eqref{rcond}. If $r<r^\ast$, then $(\hat{\sigma}(r),\hat{\mu}(r))$ is the unique optimal solution of $(\M(r))$ and
		\[
		\hat{w}(r)\coloneqq w(\hat{\mu})=\frac{1}{2\d}\sqb{\of{\frac{\d}{\g r+\b}-\g r-\b}C^{-1}m+\of{\a+\frac{\g r(\b r+\a)}{\g r+\b}}C^{-1}\1}
		\]
		is the unique optimal solution of $(\P(r))$.
	\end{enumerate}
\end{theorem}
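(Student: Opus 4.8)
The plan is to follow the same route as in the proofs of Theorem~\ref{thm1} and Theorem~\ref{thm2}: reduce everything to the two-dimensional problem $(\M(r))$ and then recover the optimal portfolio through Lemma~\ref{mw}. The decisive structural feature of the present case $\rho_2=\sqrt{\d/\g}$ is that $\L(r)$ is now \emph{parallel} to the upper asymptote $\mu=\frac{\b}{\g}+\sqrt{\d/\g}\,\sigma$ of $\H$. Substituting the equation of $\L(r)$ into that of $\H$ produces the same relation \eqref{quadforint}, but now the coefficient $1-\frac{\g}{\d}\rho_2^2$ of $\sigma^2$ vanishes, so \eqref{quadforint} \emph{degenerates to a linear equation} in $\sigma$. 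When $r\neq-\frac{\b}{\g}$ this linear equation has the unique root $\hat{\sigma}(r)$; solving it and reading $\hat{\mu}(r)$ off the relation $\rho_2\sigma-\mu=r$ gives exactly \eqref{linsigma}, where one uses $\b^2+\d=\a\g$ to simplify the numerator to $\g r^2+2\b r+\a$. When $r=-\frac{\b}{\g}$ the linear equation reduces to $-\frac1\g=0$ and has no root, so $\H\cap\L(r)=\emptyset$. Since $\g r^2+2\b r+\a$ has discriminant $-4\d<0$ and $\g>0$, this quadratic is everywhere positive, so the sign of $\hat{\sigma}(r)$ equals that of $\g r+\b$, yielding the stated trichotomy.

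For feasibility I would show that $\inf_{(\sigma,\mu)\in\co\H_+}\of{\rho_2\sigma-\mu}=-\frac{\b}{\g}$, not attained. As in \eqref{deriv2}, the map $\sigma\mapsto\rho_2\sigma-\mu(\sigma)$ on the upper wing is strictly decreasing (here the derivative is $\sqrt{\d/\g}\of{1-\sigma/\sqrt{\sigma^2-1/\g}}<0$) and tends to $-\frac{\b}{\g}$ as $\sigma\uparrow+\infty$; minimizing over $\co\H_+$ reduces to the wing because the objective increases in $\sigma$ at fixed $\mu$. Hence the risk constraint $\rho_2\sigma-\mu\le r$ is satisfiable within $\co\H_+$ exactly when $r>-\frac{\b}{\g}$, which is the asserted feasibility criterion.

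Cases (i) and (ii) are then immediate. The upper-wing sequence with $\sigma\uparrow+\infty$ is feasible for every $r>-\frac{\b}{\g}$ (its constraint value tends to $-\frac{\b}{\g}<r$), while the objective behaves as $\rho_1\sigma-\mu(\sigma)$; this is precisely the limit computed in Proposition~\ref{aux}(i),(ii). For $\rho_1<\sqrt{\d/\g}$ it diverges to $-\infty$, giving unboundedness (i); for $\rho_1=\sqrt{\d/\g}$ it decreases to $-\frac{\b}{\g}$, and since the feasible region lies inside $\co\H_+$, where Proposition~\ref{aux}(ii) already gives $-\frac{\b}{\g}$ as the unattained infimum, the infimum equals $-\frac{\b}{\g}$ and is not attained, giving (ii). The corresponding statements for $(\P(r))$ follow from the equivalence of $(\M(r))$ and $(\P(r))$ established before Proposition~\ref{aux}.

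The substantive case is (iii), $\rho_1>\sqrt{\d/\g}$, where I mirror Theorem~\ref{thm1}(iii). As the feasible region of $(\M(r))$ sits inside $\co\H_+$, the global minimizer $(\sigma^\ast,\mu^\ast)$ of $(\M_\A)$ from \eqref{global} is optimal for $(\M(r))$ precisely when it is feasible, i.e. $\rho_2\sigma^\ast-\mu^\ast=r^\ast\le r$, which is the ``if and only if $r\ge r^\ast$'' claim. For $r<r^\ast$ I would show $(\hat{\sigma}(r),\hat{\mu}(r))$ is the unique optimum via the usual monotonicity comparison, and the two facts this requires form the main obstacle. \textbf{First}, the intersection point must lie on the \emph{upper} wing, i.e. $\hat{\mu}(r)=\mu(\hat{\sigma}(r))$; since $\rho_2\sigma-\mu(\sigma)$ decreases from its corner value $r_0=\frac{\rho_2}{\sqrt\g}-\frac{\b}{\g}=\frac{\sqrt\d-\b}{\g}$ down to $-\frac{\b}{\g}$, this holds exactly when $r\le r_0$, so I must verify $r^\ast\le r_0$; substituting $\rho_2=\sqrt{\d/\g}$ and squaring reduces this to $\rho_1\ge\sqrt{\d/\g}$, which holds, and strictly, so $r^\ast<r_0$. \textbf{Second}, $\hat{\sigma}(r)>\sigma^\ast$ for $r<r^\ast$: at $r=r^\ast$ the point $(\sigma^\ast,\mu^\ast)$ lies on both $\H$ and $\L(r^\ast)$, hence equals $(\hat{\sigma}(r^\ast),\hat{\mu}(r^\ast))$, so $\hat{\sigma}(r^\ast)=\sigma^\ast$; monotonicity of $r\mapsto\hat{\sigma}(r)$ (the inverse of the decreasing $\sigma\mapsto\rho_2\sigma-\mu(\sigma)$) then forces $\hat{\sigma}(r)>\sigma^\ast$ when $r<r^\ast$. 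With these in hand, any feasible $(\bar{\sigma},\bar{\mu})$ satisfies $\bar{\mu}\le\mu(\bar{\sigma})$ and $\rho_2\bar{\sigma}-\mu(\bar{\sigma})\le r=\rho_2\hat{\sigma}(r)-\mu(\hat{\sigma}(r))$, so $\bar{\sigma}\ge\hat{\sigma}(r)>\sigma^\ast$; since $\sigma\mapsto\rho_1\sigma-\mu(\sigma)$ is strictly increasing for $\sigma>\sigma^\ast$ by \eqref{strictlyinc}, this gives $\rho_1\bar{\sigma}-\bar{\mu}\ge\rho_1\hat{\sigma}(r)-\hat{\mu}(r)$ with equality only at the intersection point. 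Finally, Lemma~\ref{mw} identifies the optimal portfolio via \eqref{wfml} at $\mu=\hat{\mu}(r)$; inserting \eqref{linsigma} and simplifying (again using $\b^2+\d=\a\g$) yields the stated closed form for $\hat{w}(r)$, and the $(\M(r))$--$(\P(r))$ correspondence transfers optimality and uniqueness to $(\P(r))$.
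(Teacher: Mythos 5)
Your proposal is correct, and its overall architecture is the one the paper uses: degenerate \eqref{quadforint} to a linear equation, read off $(\hat{\sigma}(r),\hat{\mu}(r))$ and the feasibility threshold $r>-\frac{\b}{\g}$, dispatch cases (i)--(ii) via the asymptote limit as in Proposition~\ref{aux} and Theorem~\ref{thm1}, and in case (iii) compare any feasible point with $\hat{\sigma}(r)$ using the monotonicity of $\sigma\mapsto\rho_j\sigma-\mu(\sigma)$. The one place you genuinely diverge is the step the paper leaves implicit by pointing to the proof of (ii)b of Theorem~\ref{thm2}: there the inequalities $\mu(\sigma_+(r))=\mu_+(r)$ and $\sigma_+(r)>\sigma^\ast$ are verified by squaring and locating the roots of explicit quadratics in $r$. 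You instead observe that $r^\ast=\rho_2\sigma^\ast-\mu(\sigma^\ast)$ places $(\sigma^\ast,\mu^\ast)$ on $\H\cap\L(r^\ast)$, so $\hat{\sigma}(r^\ast)=\sigma^\ast$ by uniqueness of the intersection, and then invoke the strict monotonicity of $r\mapsto\hat{\sigma}(r)$ (as the inverse of the decreasing map $\sigma\mapsto\rho_2\sigma-\mu(\sigma)$ on the upper wing) to get $\hat{\sigma}(r)>\sigma^\ast$ for $r<r^\ast$; the needed fact $r^\ast<r_0$ you check directly, and it reduces cleanly to $\rho_1>\sqrt{\frac{\d}{\g}}$ after substituting $\rho_2=\sqrt{\frac{\d}{\g}}$. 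This buys a shorter, computation-free argument that also explains \emph{why} $r^\ast$ is the threshold (it is the constraint value of the unconstrained optimizer), whereas the paper's algebraic route has the advantage of being uniform across Theorems~\ref{thm1}--\ref{thm3} without needing the intersection to be a singleton. Both are valid; your version actually supplies details the paper omits.
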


\begin{proof}
	As in the proof of Theorem~\ref{thm1}, a point $(\sigma,\mu)\in\H\cap \L(r)$ satisfies \eqref{quadforint}, which reduces to a linear equation in $\sigma$ as we have $\rho_2=\sqrt{\frac{\delta}{\gamma}}$. Suppose that $r\neq-\frac{\b}{\g}$. Then, the unique solution of this equation is given by 
	\begin{equation}\label{hatsigma}
	\hat{\sigma}(r)=\frac{r+\frac{\b}{\g}}{2\rho_2}+\frac{\frac{1}{\g}}{2\rho_2\frac{\g}{\d}\of{r+\frac{\b}{\g}}}=\frac{(\g r+\b)^2+\d}{2\rho_2\g(r\g+\b)}=\frac{\g r^2+2\b r+\a}{2\rho_2(\g r+\b)}.
	\end{equation}
	The corresponding mean value of the point is given by
	\[
	\hat{\mu}(r)=\rho_2\hat{\sigma}-r=\frac{\g r^2+2\b r+\a-2(\g r+\b)r}{2(\g r+\b)}=\frac{\a-\g r^2}{2(\g r+\b)}.
	\]
	Hence, $\H\cap\L(r)=\cb{(\hat{\sigma}(r),\hat{\mu}(r))}$ as defined by \eqref{linsigma}. Moreover, from the third expression in \eqref{hatsigma}, it is clear that $\hat{\sigma}(r)<0$ if $r<-\frac{\b}{\g}$ and $\hat{\sigma}(r)>0$ if $r>-\frac{\b}{\g}$. If $r=-\frac{\b}{\g}$, then \eqref{quadforint} has no solution so that $\H\cap\L(r)\neq\emptyset$. It follows that $(\M(r))$ is feasible if and only if $r>-\frac{\b}{\g}$.
	
	For the rest of the proof, suppose that $r>-\frac{\b}{\g}$. Note that an analogue of the preparatory claim in the proof of Theorem~\ref{thm1} can be shown here with the same arguments: every $(\bar{\sigma},\bar{\mu})\in\H_+$ with $\bar{\sigma}\geq \hat{\sigma}(r)$ is also a feasible solution of $(\M(r))$. Similarly, (i) and (ii) here can be shown here by repeating the same arguments as in the proofs of (i) and (ii) of Theorem~\ref{thm1}. Hence, we consider only the case $\rho_1>\sqrt{\frac{\d}{\g}}$ here. As in the proof of (iii) of Theorem~\ref{thm3}, we note that the unique optimal solution $(\sigma^\ast,
	\mu^\ast)$ of $(\M_\A)$ is also the unique optimal solution of $(\M(r))$ if and only if $r\geq r^\ast$, where $r^\ast$ is defined by \eqref{rcond}. Since $\rho_1>\sqrt{\frac{\d}{\g}}=\rho_2$, we have $r^\ast>-\frac{\b}{\g}$. Suppose that $r<r^\ast$. In this case, $(\hat{\sigma}(r),\hat{\mu}(r))$ is the unique optimal solution of $(\M(r))$. This can be shown using similar arguments as in the proof of (ii)b. of Theorem~\ref{thm2}. To avoid repetitions, the details are omitted.
\end{proof}

\section{Conclusion}
In this paper, we look at the static portfolio optimization problem with two coherent risk measures. We consider the case where the asset returns take arbitrary joint distributions, and characterize the optimal portfolio through the associated dual problem. The dual problem is an infinite-dimensional convex optimization problem with a linear objective function and a linear constraint besides the convex constraints on some dual variables. We detect some special cases where the dual problem reduces to a linear programming problem. In the second part of the paper, under the restriction that asset returns are jointly Gaussian, we identify all parameter configurations under which an optimal portfolio exists and provide an explicit formula for it.

\section*{Acknowledgments}\label{Acknowledgements}

We are grateful to Oya Ekin Kara\c{s}an for her continuous support at the early stages of the project. In addition, we would like to thank two anonymous referees whose comments motivated us for much of the developments in Section~\ref{arbitrary}.

\bibliographystyle{named}

\end{document}